\newcommand{\arxivversion}[1]{#1}
\newcommand{\confversion}[1]{}
\newcommand{\arxivconf}[2]{\arxivversion{#1}\confversion{#2}}
\newcommand{\hnote}[1]{}
\begin{document}
%
\title{
Counting Answers to
Existential Positive Queries:\\
A Complexity Classification
}

\arxivconf{
\author{
Hubie Chen\\Universidad del Pa\'{i}s Vasco, E-20018 San Sebasti\'{a}n, Spain\\
\emph{and} IKERBASQUE, Basque Foundation for Science, E-48011 Bilbao, Spain
 \and
Stefan Mengel\\CNRS, CRIL UMR 8188, France
}}
{
\author{
\IEEEauthorblockN{Hubie Chen}
\IEEEauthorblockA{Departamento LSI, 
Universidad del Pa\'{i}s Vasco\\
\emph{and} IKERBASQUE, Basque Foundation for Science\\
E-20018 San Sebasti\'{a}n, Spain
}
\and
\IEEEauthorblockN{Stefan Mengel}
\IEEEauthorblockA{
LIX UMR 7161\\
\'{E}cole Polytechnique\\ Universit\'{e} Paris Saclay,
France}
}
}


\date{ }
\maketitle

\begin{abstract}
Existential positive formulas form a fragment of first-order logic that
includes and is 
semantically
equivalent to unions of conjunctive queries, one of the most important and well-studied classes of queries in database theory.
We consider the complexity of counting the number of answers to 
existential positive formulas on finite structures and give a 
trichotomy theorem on query classes, in the setting of bounded arity. 
This theorem generalizes and unifies
 several known results on the complexity of conjunctive queries and unions of conjunctive queries. 
\end{abstract}


%


\confversion{\IEEEpeerreviewmaketitle}

\newtheorem{theorem}{Theorem}[section]
\newtheorem{conjecture}[theorem]{Conjecture}
\newtheorem{corollary}[theorem]{Corollary}
\newtheorem{examples}[theorem]{Examples}
\newtheorem{proposition}[theorem]{Proposition}
\newtheorem{prop}[theorem]{Proposition}
\newtheorem{lemma}[theorem]{Lemma}
\newtheorem{definition}[theorem]{Definition}
\newtheorem{observation}[theorem]{Observation}
\newtheorem{remark}[theorem]{Remark}
\newtheorem{examplecore}[theorem]{Example}

\newenvironment{example}
  {\begin{examplecore}\rm}
  {\hfill $\Box$\end{examplecore}}
%



\newenvironment{proof}{\noindent\textbf{Proof\/}.}{\hfill$\Box$\medskip}

\newcommand{\FPT}{\mathsf{FPT}}

\newcommand{\sh}{\sharp}

\newcommand{\nats}{\mathbb{N}}
\newcommand{\N}{\mathbb{N}}
\newcommand{\Z}{\mathbb{Z}}

\newcommand{\rela}{\mathbf{A}}
\newcommand{\relb}{\mathbf{B}}
\newcommand{\relc}{\mathbf{C}}
\newcommand{\reld}{\mathbf{D}}
\newcommand{\relt}{\mathbf{T}}
\newcommand{\relp}{\mathbf{P}}
\newcommand{\relf}{\mathbf{F}}
\newcommand{\relg}{\mathbf{G}}
\newcommand{\relh}{\mathbf{H}}
\newcommand{\reli}{\mathbf{I}}
\newcommand{\relm}{\mathbf{M}}
\newcommand{\reln}{\mathbf{N}}
\newcommand{\relr}{\mathbf{R}}

\newcommand{\calP}{\mathcal{P}}
\newcommand{\calE}{\mathcal{E}}

\newcommand{\free}{\mathrm{free}}
\newcommand{\closed}{\mathrm{closed}}
\newcommand{\lib}{\mathrm{lib}}

\newcommand{\id}{\mathrm{id}}
\newcommand{\surj}{\mathrm{surj}}
\newcommand{\tw}{\mathsf{tw}}
\newcommand{\starsize}{\mathsf{starsize}}
\newcommand{\qaw}{\mathsf{qaw}}
\newcommand{\topp}{\mathsf{top}}
\newcommand{\atom}{\mathsf{atom}}

\newcommand{\res}{\upharpoonright}
\newcommand{\aug}{\mathsf{aug}}
\newcommand{\width}{\mathsf{width}}
\newcommand{\shwidth}{\sh\textup{-}\width}
\newcommand{\contract}{\mathsf{contract}}

\newcommand{\p}{\mathsf{P}}
\newcommand{\np}{\mathsf{NP}}

\newcommand{\FO}{\mathsf{FO}}
\newcommand{\PP}{\mathsf{PP}}
\newcommand{\EP}{\mathsf{EP}}

\newcommand{\fo}{\mathsf{FO}}
\newcommand{\pp}{\mathsf{PP}}
\newcommand{\ep}{\mathsf{EP}}

\newcommand{\countp}{\mathsf{count}}
\newcommand{\pcountp}{\smallp\countp}

\newcommand{\dom}{\mathrm{dom}}
\newcommand{\powfin}{\wp_{\mathsf{fin}}}

\newcommand{\pn}[1]{\textsc{#1}}

\newcommand{\smallp}{\ensuremath{p\textup{-}}}
\newcommand{\clique}{\pn{Clique}}
\newcommand{\sclique}{\pn{\#Clique}}
\newcommand{\pclique}{\smallp\clique}
\newcommand{\psclique}{\smallp\sclique}

\newcommand{\param}[1]{\mathsf{param}\textup{-}#1}



\section{Introduction}

\subsection{Background}
The computational problem of evaluating a formula (of some logic)
on a finite relational structure is of central interest in 
database theory and logic.
In the context of database theory, this problem is
often referred to as \emph{query evaluation}, as
it models the posing of a query to a database, in a well-acknowledged way: 
the formula is the query,
and the structure represents the database.
We will refer to the results of such an evaluation as
\emph{answers}; 
logically, these are the satisfying assignments of the formula on the structure.
The particular case of this problem where the formula is a sentence 
is often referred to as \emph{model checking},
and even in just the case of first-order sentences,
can capture a variety of well-known decision problems
from all throughout computer science~\cite{FlumGrohe06-parameterizedcomplexity}.

In this article, we study the counting version of
this problem, namely, given a formula and a structure,
output the \emph{number} of answers
(see for example~\cite{PichlerSkritek11-counting,GrecoScarcello14-counting,DurandMengel13-structuralcounting,ChenMengel14-pp-arxiv} for previous studies).
This problem 
of \emph{counting query answers}
generalizes model checking, which can be viewed as the 
particular case thereof where one is given a sentence and structure,
and wants to decide if the number of answers is $1$ or $0$, 
corresponding to whether or not the empty assignment is satisfying.
In addition to the counting problem's basic and fundamental interest, 
it can be pointed out that all practical query languages
supported by database management systems have
a counting operator.
Indeed, it has been argued that database queries with counting
are at the basis of decision support systems 
that handle large data volume~\cite{GrecoScarcello14-counting}.

As has been previously articulated,
a typical situation in the database setting is the evaluation of a relatively short formula on a relatively large structure.  Consequently, it has been argued that, in measuring the time complexity of query evaluation tasks, one could reasonably allow a slow (non-polynomial-time) preprocessing of the formula, so long as the desired evaluation can be performed in polynomial time following the preprocessing~\cite{PapadimitriouYannakakis99-database,FlumGrohe06-parameterizedcomplexity}.
Relaxing polynomial-time computation to allow arbitrary preprocessing of a \emph{parameter} of a problem instance yields, in essence, the notion of \emph{fixed-parameter tractability}.
This notion of tractability is at the core of \emph{parameterized complexity theory}, which provides a taxonomy for classifying problems where each instance has an associated parameter.  We make use of this paradigm in this article; here, the formula is the parameter.

\subsection{Contribution}

\emph{Existential positive queries}
are the first-order formulas built from
the two binary connectives ($\wedge, \vee$)
and existential quantification.  
They include and are semantically equivalent to
the so-called
\emph{unions of conjunctive queries}, also known as
\emph{select-project-join-union queries};
these have been argued to be the most 
common database queries~\cite{AbiteboulHullVianu95-foundationsdatabases}.
Indeed, each union of conjunctive queries 
can be viewed as
an existential positive query having a particular form, namely,
a disjunction of primitive positive formulas;
recall that a \emph{primitive positive} query is an existential positive
query that does not use disjunction.

We study the problem of counting query answers
on existential positive queries.
An established way to understand which types of queries are 
computationally well-behaved and exhibit desirable,
tractable behavior is to consider this problem relative to a set
of queries,
and to attempt to understand on which sets 
this problem is tractable.
Precisely,
each set $\Phi$ of existential positive queries 
yields a restricted version of the general problem,
namely: count the number of answers of a given formula $\phi \in \Phi$
on a given finite structure~$\relb$.
We hence have a family of problems, one problem for each such set $\Phi$.
Our study focuses on formula sets that have
\emph{bounded arity}
(by which is meant that there is a constant that upper bounds the arity of all relation symbols used in formulas);
let us assume this property of all formula sets in this discussion.\footnote
{
Note that in the case of unbounded arity, complexity may depend on the choice of representation of
relations~\cite{ChenGrohe10-succinct}.
}

In this article, 
we  prove a trichotomy theorem 
(Theorem~\ref{thm:trichotomy})
on the parameterized complexity of the
discussed family of problems,
which describes the complexity of every such problem.
In particular, our trichotomy theorem 
shows that---in a sense made precise---each such problem
 is fixed-parameter tractable,
equivalent to the \emph{clique} problem, 
or as hard as the \emph{counting clique} problem (which generalizes the clique problem).
Note that the hypothesis that the clique problem is not fixed-parameter
tractable is an established one
in parameterized complexity;\footnote{It can be phrased in terms of complexity classes: FPT $\neq$ W[1].} 
under this hypothesis, our trichotomy theorem
yields a precise description of the problems (from those under consideration)
that are fixed-parameter tractable.
Our trichotomy theorem is in fact derived by invoking two theorems:
\begin{itemize}

\item A new theorem
showing that,
for each
set of existential positive queries, there exists
a set of primitive positive queries such that the
two sets exhibit the same complexity behavior
(see Theorem~\ref{thm:equivalence-theorem}).
This new theorem,
which we call the \emph{equivalence theorem}, 
can be conceived of as the primary technical
contribution of this article.

\item 
A previously presented trichotomy
on primitive positive queries~\cite{ChenMengel14-pp-arxiv,ChenMengel15-pp-icdt} (discussed in Section~\ref{sct:ppclassification}.)

\end{itemize}


\subsection{Related work}

The statement of our new trichotomy theorem 
generalizes, unifies, and strengthens a number of existing 
parameterized complexity classification results in the literature, namely:
\begin{itemize}

\item The dichotomy for model checking primitive positive formulas~\cite{Grohe07-otherside}, 
which built on a previous dichotomy~\cite{GroheSchwentickSegoufin01-conjunctivequeries}; see also~\cite{ChenMueller14-hierarchy}.

\item The dichotomy for model checking existential positive formulas~\cite{Chen14-existentialpositive}.

\item The dichotomy for counting answers to quantifier-free primitive positive formulas~\cite{DalmauJonsson04-counting} (phrased as the problem of counting homomorphisms between relational structures).

\item The trichotomy for 
counting answers to primitive positive formulas~\cite{ChenMengel14-pp-arxiv,ChenMengel15-pp-icdt}, which trichotomy built on the previous work~\cite{DurandMengel13-structuralcounting}.

\end{itemize}
Let us emphasize that we only claim to generalize the parameterized
complexity versions of the presented results.  In
some of the above works, such as 
the works \cite{Grohe07-otherside}
and~\cite{DalmauJonsson04-counting},
the problems that are
classified as
fixed-parameter tractable are also polynomial-time tractable.
We can further remark that 
there are problems
from the dichotomy theorem 
on model checking existential positive formulas~\cite{Chen14-existentialpositive} that are shown to be fixed-parameter tractable
but also NP-complete.

The techniques used to prove our equivalence theorem 
are algebraic and combinatorial,
and are quite different in nature from and contrast with
those used to prove the previous classifications,
which were more graph-theoretic and logical in flavor.
Indeed, while the graph-theoretic measure of
\emph{treewidth} played a key role in the 
statement and proof of the previous trichotomy
as well as of the previous dichotomies on primitive positive queries,
it is not at all needed to prove our equivalence theorem.
To establish the equivalence theorem, we make a key application
of the inclusion-exclusion counting principle
to translate an existential positive formula to 
a finite set of primitive positive formulas
(see Section~\ref{subsect:all-free}),
which, in the setup considered by the article,
is crucial to handling and understanding disjunction.
We believe that the developed theory that supports 
said application should provide a valuable foundation
for coping with disjunction in logics that are
more expressive than the one considered here.



\section{Preliminaries}

\subsection{Basic definitions and notions}

Note that $\cdot$ is sometimes used for multiplication of 
real
numbers.

{\bf Polynomials.}
We remind the reader of some basic facts about polynomials
 which we will use throughout the paper.  
Here, a univariate polynomial $p$ in a variable $x$ is a 
function $p(x) =\sum_{i=0}^d a_i x^i$ where $d \geq 0$,
each $a_i \in \mathbb{R}$
and
$a_d\ne 0$, or the \emph{zero polynomial} $p(x) = 0$.
The $a_i$ are called \emph{coefficients} of $p$.
The degree of a polynomial is defined as $-\infty$ 
in the case of the zero polynomial, and as $d$ otherwise.
Let $(x_0, y_0), \ldots, (x_n, y_n)$ be $n+1$ pairs of real numbers. Then there is a uniquely determined polynomial of degree at most $n$ such that $p(x_i)=y_i$ for each $i$;
consequently,
a polynomial $p$ of degree $n$ that has at least $n+1$ zeroes 
(where a \emph{zero} is a value $x$ such that $p(x) = 0$)
is the zero polynomial.
If all $x_i$ and $y_i$ are rational numbers, then the coefficients $a_i$ of this polynomial are rational numbers as well;
moreover, the $a_i$ can be computed in polynomial time.

{\bf Logic.}
We assume basic familiarity with the syntax and semantics of first-order logic.
In this article, we focus on relational first-order logic
where equality is not built-in to the logic.
Hence, each \emph{vocabulary/signature} under discussion consists
only of relation symbols.
We assume structures under discussion to be \emph{finite}
(that is, have finite universe);
nonetheless, 
we sometimes describe structures as \emph{finite} for emphasis.
We assume that the relations of structures are represented
as lists of tuples.
We use the letters $\rela$, $\relb$, $\ldots$ to denote structures,
and the corresponding letters $A$, $B$, $\ldots$
to denote their respective universes.
When $\tau$ is a signature, we use $\reli_{\tau}$ 
to denote 
the $\tau$-structure 
with universe $\{ a \}$ and where
each relation symbol $R\in \tau$ has $R^\reli =\{ (a,\ldots, a)\}$. 
When $\rela, \relb$ are structures over the same signature $\tau$,
a \emph{homomorphism} from $\rela$ to $\relb$
is a mapping $h: A \to B$ such that, for each $R \in \tau$
and each tuple $(a_1, \ldots, a_k) \in R^{\rela}$,
it holds that $(h(a_1), \ldots, h(a_k)) \in R^{\relb}$.

We use the term \emph{fo-formula} to refer to a first-order formula.
An \emph{ep-formula} (short for \emph{existential positive formula}) 
is a fo-formula built from
\emph{atoms} 
(by which we refer to predicate applications of the form $R(v_1, \ldots, v_k)$, where $R$ is a relation symbol and the $v_i$ are variables),
conjunction ($\wedge$), disjunction ($\vee$),
and existential quantification ($\exists$).
A \emph{pp-formula}
(short for \emph{primitive positive formula}) 
is defined as an ep-formula where disjunction does not occur.
An fo-formula is \emph{prenex} if it has the form
$Q_1 v_1 \ldots Q_n v_n \theta$ where $\theta$ is quantifier-free,
that is, if all quantifiers occur in the front of the formula.
The set of free variables of a formula $\phi$ is denoted by
$\free(\phi)$ and is defined as usual;
a formula $\phi$ is a \emph{sentence} if $\free(\phi) = \emptyset$.

We now present some definitions and conventions that
are not totally standard.
A primary concern in this article 
is in counting satisfying assignments of fo-formulas
on a finite structure.
The count is sensitive to the set of variables over which
assignments are considered; and, we will sometimes 
(but not always) want to count
relative to a set of variables that is strictly larger than
the set of free variables.
Hence, we will often associate with each fo-formula $\phi$
a set $V$ of variables called the \emph{liberal variables},
denoted by $\lib(\phi)$,
which is required to be~a 
superset of $\free(\phi)$, that is, 
we require $\lib(\phi) \supseteq \free(\phi)$.
Note that $\lib(\phi)$ may contain variables that do not
occur at all in atoms of $\phi$.
To indicate that $V$ is the set of liberal variables of $\phi$,
we often use the notation $\phi(V)$;
we also use $\phi(v_1, \ldots, v_n)$, where the $v_i$ 
are a listing of the elements of $V$.
Relative to a formula $\phi(V)$, when~$\relb$ is a structure,
we will use $\phi(\relb)$ to denote the set of
assignments $f: V \to B$ such that $\relb, f \models \phi$.
\emph{We assume that, in each prenex formula
with
liberal variables associated with it,
no variable is both liberal and quantified.}
We call an fo-formula $\phi$ \emph{free} if $\free(\phi) \neq \emptyset$,
and
\emph{liberal} if 
$\lib(\phi)$ is defined and $\lib(\phi) \neq \emptyset$.

\begin{example}
Let us consider the formula $\phi(x,y,z) = R(x,y) \vee S(y,z)$.
As indicated above, the notation $\phi(x,y,z)$ is used to indicate
that $\lib(\phi) = \{ x, y, z \}$.  
As $\free(\phi) = \{ x, y, z \}$, we have $\lib(\phi) = \free(\phi)$.
Define $\psi(x,y,z) = R(x,y)$ and $\psi'(x,y,z) = S(y,z)$.
By the notation $\psi(x,y,z)$, we indicate that
$\lib(\psi) = \{ x, y, z \}$; likewise, it holds that 
$\lib(\psi') = \{ x, y, z \}$.
Notice that $\free(\psi) = \{ x, y \}$, so we have that
$\lib(\psi)$ is a proper superset of $\free(\psi)$;
in fact, the variable $z \in \lib(\psi)$ does not occur
at all in an atom of $\psi$.
Define also $\theta(x,y) = R(x,y)$; by the notation
$\theta(x,y)$, we indicate that $\lib(\theta) = \{ x, y \}$.

Observe that, for any structure $\relb$, we have
$\phi(\relb) = \psi(\relb) \cup \psi'(\relb)$
(and hence $|\phi(\relb)| = |\psi(\relb) \cup \psi'(\relb)|$).
Observe, however, that for any structure $\relb$
where $\theta(\relb)$ is non-empty, it does not
hold that 
$\phi(\relb) = \theta(\relb) \cup \psi'(\relb)$,
since $\phi(\relb)$ contains only assignments defined on 
$\lib(\phi) = \{ x, y, z \}$, whereas
$\theta(\relb)$ contains only assignments defined on
$\lib(\theta) = \{ x, y \}$.
\end{example}

{\bf pp-formulas.}
It is well-known~\cite{ChandraMerlin77-optimal}
that there is a correspondence between prenex pp-formulas
and relational structures.
In particular, 
each prenex pp-formula $\phi(S)$ (on signature $\tau$) 
with $\lib(\phi) = S$
may be viewed as a
pair $(\rela, S)$ 
consisting of a structure $\rela$ (on $\tau$) 
and a set $S$;
the universe $A$ of $\rela$ is
the union of $S$ with the variables appearing in $\phi$,
and the following condition 
defines the relations of $\rela$: for each $R \in \tau$,
a tuple $(a_1, \ldots, a_k) \in A^k$
is in $R^{\rela}$ 
if and only if $R(a_1, \ldots, a_k)$ appears in $\phi$.
In the other direction, 
such a pair $(\rela, S)$ can be viewed as a prenex pp-formula $\phi(S)$
where all variables in $A \setminus S$ are quantified
and the atoms of $\phi$ are defined according to the above condition.
A basic known fact~\cite{ChandraMerlin77-optimal}
that we will use
is that
when $\phi(S)$ is a pp-formula corresponding to the pair $(\rela, S)$,
$\relb$ is an arbitrary structure, and $f: S \to B$ is an arbitrary map,
it holds that 
$\relb, f \models \phi(S)$ if and only if
there is an extension $f'$ of $f$ that is a homomorphism
from~$\rela$ to $\relb$.
\emph{We will freely interchange between the
 structure view
and the usual notion
of a prenex pp-formula.}
For a pp-formula specified as a 
pair $(\rela,S)$, we typically assume that $S \subseteq A$.

\begin{example}
\label{ex:pp-formula}
Consider the  pp-formula
$\phi(x,x',y,z) = \exists y' \exists u \exists v \exists w (E(x, x') \wedge E(y,y') \wedge F(u,v) \wedge G(u,w)).$
The notation $\phi(x, x',y,z)$ 
indicates that $\lib(\phi) = \{ x, x', y, z \}$.
Note that $\free(\phi) = \{ x, x', y \}$.
To convert $\phi$ to a structure $\rela$, 
we take the universe $A$ of $\rela$ to be
the union of $\lib(\phi)$ with all variables appearing in $\phi$,
so $A = \{ x, x',y,z, y', u, v, w \}$.
Then, we define the relations as just described above,
so $E^{\rela} = \{ (x, x'), (y, y') \}$, 
$F^{\rela} = \{ (u,v) \}$, and
$G^{\rela} = \{ (u,w) \}$.
The resulting pair representation of $\phi$ 
is $(\rela, \{ x, x', y, z \})$.
\end{example}

Two structures are \emph{homomorphically equivalent} if
each has a homomorphism to the other.
A  structure is a \emph{core} if it is not homomorphically equivalent to a proper substructure of itself.  A structure $\relb$ is a 
\emph{core} of a structure $\rela$ if $\relb$ is a substructure of $\rela$ that is a core and is homomorphically equivalent to $\rela$.
It is known that all cores of a structure are isomorphic and hence one sometimes speaks of \emph{the} core of a structure.

For a prenex pp-formula $(\rela,S)$ 
on signature $\tau$,
we define its \emph{augmented structure},
denoted by
$\aug(\rela,S)$,
to be the structure
 over the expanded vocabulary $\tau\cup \{R_a\mid a\in S\}$ (understood to be a disjoint union)
where $R_a^{\aug(\rela,S)} =\{a\}$;
we define the \emph{core} of the pp-formula $(\rela,S)$
to be the core of~$\aug(\rela,S)$.

The following fundamental facts on pp-formulas
 will be used throughout.

\begin{theorem}[follows from \cite{ChandraMerlin77-optimal}]\label{thm:ChandraMerlin}
Suppose
 that each of the pairs $(\rela,V)$, $(\relb,V)$ is a prenex pp-formula.
The formula $(\relb,V)$ logically entails
the formula $(\rela,V)$ if and only if
there exists a homomorphism from the structure
$\aug(\rela,V)$ to the structure $\aug(\relb,V)$.
The formulas $(\rela, V)$, $(\relb,V)$ are
logically equivalent if and only if
they have isomorphic cores, or equivalently, when
$\aug(\rela,V)$ and $\aug(\relb,V)$ are homomorphically equivalent.
\end{theorem}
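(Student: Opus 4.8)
The plan is to establish the first biconditional directly from the homomorphism characterization of pp-formula satisfaction recalled just above the statement, and then to obtain the second assertion by applying the first in both directions together with the basic theory of cores.

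For the first claim, the easy direction is the right-to-left one. Suppose $h$ is a homomorphism from $\aug(\rela,V)$ to $\aug(\relb,V)$. Since $R_a^{\aug(\rela,V)} = \{a\}$ and $R_a^{\aug(\relb,V)} = \{a\}$ for each $a \in V$, the condition that $h$ respect the unary predicate $R_a$ forces $h(a) = a$; thus $h$ fixes $V$ pointwise, and (ignoring the disjoint augmenting symbols) $h$ restricts to a homomorphism from $\rela$ to $\relb$. Now let $\relc$ be any structure and $g \colon V \to C$ any map with $\relc, g \models (\relb, V)$; by the recalled fact there is an extension $g'$ of $g$ that is a homomorphism from $\relb$ to $\relc$. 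Then $g' \circ h$ is a homomorphism from $\rela$ to $\relc$, and since $h$ fixes $V$ we have $(g' \circ h)(a) = g'(a) = g(a)$ for $a \in V$, so $g' \circ h$ extends $g$. Hence $\relc, g \models (\rela, V)$, which shows that $(\relb,V)$ entails $(\rela,V)$.

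For the converse, the key is to use $\relb$ itself as the test structure with the identity assignment. Take $\relc = \relb$ and $g = \id_V$ (legitimate since $V \subseteq B$). The identity map on $B$ is a homomorphism from $\relb$ to $\relb$ extending $\id_V$, so $\relb, \id_V \models (\relb, V)$; by the assumed entailment, $\relb, \id_V \models (\rela, V)$, and hence there is an extension $f' \colon A \to B$ of $\id_V$ that is a homomorphism from $\rela$ to $\relb$. Because $f'$ fixes $V$ pointwise and $R_a^{\aug(\rela,V)} = \{a\} = R_a^{\aug(\relb,V)}$, the map $f'$ also respects every augmenting predicate, so it is a homomorphism from $\aug(\rela,V)$ to $\aug(\relb,V)$, as required. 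I expect this step to be the main obstacle: the substance of the argument is the choice of the canonical witness $\relb$ paired with the identity assignment, and the recognition that entailment on this single instance already suffices; the augmenting unary predicates are exactly what pin down the behaviour on $V$ and let one pass between homomorphisms and satisfaction.

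For the second claim, note that logical equivalence of $(\rela,V)$ and $(\relb,V)$ is entailment in both directions. By the first claim this amounts to the existence of homomorphisms from $\aug(\rela,V)$ to $\aug(\relb,V)$ and from $\aug(\relb,V)$ to $\aug(\rela,V)$, that is, to $\aug(\rela,V)$ and $\aug(\relb,V)$ being homomorphically equivalent. Finally I would invoke the standard fact that two finite structures are homomorphically equivalent if and only if their cores are isomorphic; the one routine point to verify is this equivalence itself, which follows because each structure is homomorphically equivalent to its core and because an endomorphism of a finite core is necessarily an automorphism (so that a pair of mutually inverse-up-to-endomorphism maps between two cores forces them to be isomorphic). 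Since the core of the pp-formula $(\rela,V)$ is by definition the core of $\aug(\rela,V)$, this yields the "isomorphic cores" formulation and completes the proof.
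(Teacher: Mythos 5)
The paper states this theorem without proof, citing it as a consequence of Chandra--Merlin; your argument is precisely the standard one that the citation refers to, built on the same "satisfaction equals existence of an extending homomorphism" fact the paper records just before the statement. Your proof is correct, including the key step of testing entailment on the canonical instance $(\relb,\id_V)$ and the observation that the augmenting unary predicates force homomorphisms to fix $V$ pointwise.
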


{\bf ep-formulas.}
In order to discuss ep-formulas, we will employ the following
terminology.
An ep-formula is \emph{disjunctive} 
if it is the disjunction of prenex pp-formulas;
when $\phi$ is a disjunctive ep-formula
with $\lib(\phi)$ defined, we typically assume
that each of the pp-formulas $\psi$ that appear as disjuncts
of $\phi$ has $\lib(\psi)$ defined as $\lib(\phi)$.
(In this way, for an arbitrary finite structure $\relb$,
it holds that $|\phi(\relb)| = |\bigcup_{\psi} \psi(\relb)|$,
where the union is over all such disjuncts $\psi$.)
An ep-formula is \emph{all-free} if it is disjunctive
and each pp-formula appearing as a disjunct is free.
An ep-formula $\phi(S)$ is \emph{normalized} if it is disjunctive
and for each sentence disjunct
$(\rela, S)$ and any other disjunct $(\rela', S)$,
there is no homomorphism from 
$\aug(\rela, S)$ to $\aug(\rela', S)$
(equivalently, there is no homomorphism from $\rela$ to $\rela'$).
It is straightforward to verify that there is an algorithm that,
given an ep-formula, outputs a logically equivalent normalized ep-formula.

{\bf Graphs.}
To every prenex pp-formula $(\rela, S)$
we assign a graph 
whose vertex set is $A \cup S$ 
and where two vertices are connected by an edge if they appear together in a tuple of a relation of $\rela$.
A prenex pp-formula $(\rela, S)$ is called \emph{connected} if its graph is connected. 
A prenex pp-formula $(\rela', S')$ is a
\emph{component} of a prenex pp-formula $(\rela, S)$
over the same signature $\tau$
if there exists a set $C$ that forms a connected component
of the graph of $(\rela, S)$, where:
\begin{itemize}

\item $S' = S \cap C.$

\item For each relation $R \in \tau$, 
a tuple $(a_1, \ldots, a_k)$ is in $R^{\rela'}$
if and only if 
 $(a_1, \ldots, a_k) \in R^{\rela} \cap C^k$.

\end{itemize}
Note that when this holds, the graph of $(\rela', S')$
is the connected component of the graph of $(\rela, S)$ on vertices $C$.
We will use the fact that, if $\phi(V)$ is a prenex pp-formula
and $\phi_1(V_1), \ldots, \phi_c(V_c)$ is a list of its components,
then for any finite structure $\relb$,
it holds that 
$|\phi(\relb)| = \prod_{i=1}^c |\phi_i(\relb)|$.

\begin{example}
\label{ex:pp-formula-components}
Let $\phi$ be the free prenex pp-formula 
from Example~\ref{ex:pp-formula}, and let $(\rela, S)$ be the 
pair representation given there.
The connected components of the graph of $(\rela, S)$
are 
$\{ x, x'\}$, $\{ y, y'\}$, $\{ z \}$,
and
$\{u, v, w\}$.
There are thus four components of the formula $(\rela, S)$;
they are
$(\rela'_{\{x,x'\}},  \{ x, x' \})$,
$(\rela'_{\{y,y'\}},  \{ y \})$, 
$(\rela'_{\{z\}},     \{ z \})$, and
$(\rela'_{\{u,v,w\}}, \emptyset)$
(respectively),
where each $\rela'_C$ is the structure $\rela'$ defined above,
with respect to the set $C$.

Written logically, these four components are
$\psi_1(x, x') =  E(x,x')$,
$\psi_2(y) = \exists y' E(y,y')$,
$\psi_3(z) = \top$, 
and
$\psi_4(\emptyset) = \exists u \exists v \exists w (F(u,v) \wedge G(u,w))$,
respectively.
Here, $\top$ denotes the empty conjunction (considered to be true).
\end{example}



\subsection{Counting complexity}

Throughout, we use $\Sigma$ to denote an alphabet over which
strings are formed.
All problems to be considered are viewed as counting problems.
So, a \emph{problem} is a mapping $Q: \Sigma^* \to \N$.
We view decision problems as problems where,
for each $x \in  \Sigma^*$, it holds that $Q(x)$ is equal to $0$ or $1$.
A \emph{parameterization} is a mapping
$\kappa: \Sigma^* \to \Sigma^*$.
A parameterized problem is a pair $(Q, \kappa)$
consisting of a problem $Q$ and a parameterization~$\kappa$.
Throughout, by $\pi_i$ we denote the operator
that projects a tuple onto its $i$th coordinate.

A partial function $T: \Sigma^* \to \N$
is \emph{polynomial-multiplied} 
with respect to a parameterization~$\kappa$
if there exists a computable function $f: \Sigma^* \to \N$
and a polynomial $p: \N \to \N$ such that,
for each $x \in \dom(T)$, 
it holds that $T(x) \leq f(\kappa(x)) p(|x|)$.

We now give a definition of FPT-computability for partial mappings.

\begin{definition}
Let $\kappa: \Sigma^* \to \Sigma^*$
be a parameterization.
A partial mapping $r: \Sigma^* \to \Sigma^*$
is \emph{FPT-computable} with respect to $\kappa$
if there exist a polynomial-multiplied function $T: \Sigma^* \to \N$
(with respect to $\kappa$)
with $\dom(T) = \dom(r)$
and an algorithm $A$ such that,
for each string $x \in \dom(r)$,
the algorithm $A$ computes $r(x)$ within time $T(x)$;
when this holds, we also say that $r$ is \emph{FPT-computable}
with respect to $\kappa$ \emph{via $A$}.
\end{definition}

As is standard, we may and do freely interchange among
elements of $\Sigma^*$, $\Sigma^* \times \Sigma^*$, and $\N$.
We define FPT to be the class 
that contains a parameterized problem $(Q, \kappa)$
if and only if $Q$ is FPT-computable with respect to $\kappa$.

We now introduce a notion of reduction for counting problems,
which is a form of Turing reduction.
We use $\powfin(A)$ to denote the set containing all finite subsets of
$A$.

\begin{definition}
A \emph{counting FPT-reduction} from a parameterized problem
$(Q, \kappa)$ to another $(Q', \kappa')$
consists of 
 a computable function $h: \Sigma^* \to \powfin(\Sigma^*)$,
and
an algorithm $A$ such that:
\begin{itemize}
\item  on an input $x$,
$A$ may make oracle queries of the form
$Q'(y)$ with $\kappa'(y) \in  h(\kappa(x))$, and
\item $Q$ is FPT-computable with respect to $\kappa$ via $A$.
\end{itemize}
\end{definition}

We use $\clique$ to denote the decision problem  where $(k, G)$ is a
yes-instance when $G$ is a graph that contains a clique of size $k\in
\mathbb{N}$. By $\sclique$ we denote the problem of counting, given
$(k, G)$, the number of $k$-cliques in the graph $G$.
The parameterized versions of these problems,
denoted by $\pclique$ and $\psclique$, are defined
via the parameterization $\pi_1(k,G) = k$.

\subsection{Counting case complexity}

We employ the framework of \emph{case complexity} to
develop some of our complexity results.
We present the needed elements of this
framework for counting problems.
The definitions and results here are due to~\cite{ChenMengel14-pp-arxiv,ChenMengel15-pp-icdt},
are based on the theory of~\cite{Chen14-frontier},
and are presented here for the sake of self-containment;
see those articles for further discussion and motivation of the framework.

The case complexity framework was developed to prove results
on restricted versions of parameterized problems where
not all values of the parameter are permitted.
This type of restricted problem
 arises naturally in query answering problems,
 where one often restricts the queries that are admissible,
 as is done here
(for other examples, see~\cite{DalmauJonsson04-counting,Grohe07-otherside,Chen14-frontier}).

The case complexity framework provides a notion of \emph{case problem}
and a notion of reduction between case problem.  
A case problem was originally~\cite{Chen14-frontier} 
defined as a language $Q$ of pairs (that is, a subset of $\Sigma^* \times \Sigma^*$) where the first element of each pair is ultimately viewed as the parameter, along with a set $S \subseteq \Sigma^*$ restricting the permitted parameter values.
In this article, as we are dealing with counting complexity, in lieu of considering languages, we will consider mappings $\Sigma^* \times \Sigma^* \to \N$.  (Of course, a language of pairs can be naturally viewed as such a mapping by taking its characteristic function.)

One benefit of the framework is that the notion of reduction
does not rely on any form of computability assumption on the sets $S$ involved.  Thus, in comparing case problems using this notion of reduction, 
one does not need to discuss the computability status of these sets $S$, even though in general, it is usual that authors ultimately assume some form of computability on these sets (typically computable enumerability or computability).



Let us turn to the formal presentation of the framework.
A \emph{case problem} consists of a problem
$Q: \Sigma^* \times \Sigma^* \to \N$
and a subset $S \subseteq \Sigma^*$,
and is denoted $Q[S]$.
Note that, although a problem above is defined as a mapping
from $\Sigma^*$ to $\N$, here we work with a problem
that is a mapping from $\Sigma^* \times \Sigma^*$ to $\N$;
this is natural in the current paper, where
an input to the studied problem consists of two parts,
a formula and a structure.  Note that a mapping
$\Sigma^* \times \Sigma^* \to \N$ can be naturally viewed
as a mapping $\Sigma^* \to \N$, as there are natural and well-known
ways to encode
the elements of $\Sigma^* \times \Sigma^*$ as elements of $\Sigma^*$.
For each case problem $Q[S]$, we define
$\param{Q[S]}$ as the parameterized problem $(P, \pi_1)$
where $P(s, x)$ is defined as equal to 
$Q(s, x)$ if $s \in S$, and as $0$ otherwise.





We have the following reduction notion for case problems.

\begin{definition}
A \emph{counting slice reduction} from a case problem $Q[S]$ to a second case problem $Q'[S']$ consists of
\begin{itemize}
 \item a computably enumerable language $U\subseteq \Sigma^* \times \powfin(\Sigma^*)$, and
 \item a partial function 
$r: \Sigma^* \times \powfin(\Sigma^*) \times \Sigma^*\rightarrow \Sigma^*$ 
that has domain $U\times \Sigma^*$ and is 
FPT-computable with respect to $(\pi_1, \pi_2)$ via
an algorithm $A$ that, on input $(s, T, y)$, 
may make queries of the form $Q'(t,z)$
where $t \in T$,
\end{itemize}
such that the following conditions hold:
\begin{itemize}
 \item (coverage) for each $s\in S$, there exists $T \subseteq S'$
such that $(s, T)\in U$, and 
 \item (correctness) for each $(s, T) \in U$, it holds (for each $y\in \Sigma^*$) that 
$Q(s,y) = r(s, T, y).$
\end{itemize}
\end{definition}

Let us provide some intuition for this definition.
Here, when discussing an instance $(s, y)$ of a case problem,
we refer to the first part $s$ as the parameter.
The role of $U$ is to provide all pairs $(s, T)$ such that
instances (of the first problem) with parameter $s$ 
can be reduced to instances (of the second problem) 
whose parameters lie in $T$.
Correspondingly, the \emph{coverage} condition posits that
each $s \in S$ is covered by the second set $S'$ in the sense that
there exists a pair $(s, T) \in U$ with $T \subseteq S'$.
The partial function $r$ is the actual reduction; given
a pair $(s, T) \in U$ along with a string $y$, it computes
the value $Q(s,y)$---this is what the \emph{correctness} condition asserts.
As here in this article we are dealing with counting complexity, we permit
a form of Turing reduction;
so, the algorithm $A$ of the partial function $r$, upon being given
a triple $(s, T, y)$,
may make (possibly multiple) 
queries to the second problem, so long as the queries
are about instances whose parameter falls into $T$.

We have the following key property of counting slice reducibility.

\begin{theorem} \label{thm:transitivity}
\cite{ChenMengel14-pp-arxiv}
Counting slice reducibility is transitive.
\end{theorem}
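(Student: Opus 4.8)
The plan is to prove that counting slice reducibility is transitive by composing the two reductions in the natural way, while being careful to respect the structural constraints of the definition---namely, that the intermediate language $U$ must be computably enumerable, that the reduction function must be FPT-computable with respect to the projection parameters, and that oracle queries must stay within the prescribed parameter sets. Suppose we are given a counting slice reduction from $Q[S]$ to $Q'[S']$, consisting of a c.e.\ language $U$ and a partial function $r$ computed by an algorithm $A$, and a second counting slice reduction from $Q'[S']$ to $Q''[S'']$, consisting of a c.e.\ language $U'$ and a partial function $r'$ computed by an algorithm $A'$. I would build a composed reduction from $Q[S]$ to $Q''[S'']$.

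The key construction is to define the composed language $U''$ so that a pair $(s, T'')$ lies in $U''$ precisely when there is some finite $T \subseteq \Sigma^*$ with $(s,T) \in U$ and, for every $t \in T$, some finite $T'_t$ with $(t, T'_t) \in U'$, where $T''$ is the union $\bigcup_{t \in T} T'_t$. The coverage condition is then verified by chaining the two given coverage conditions: starting from $s \in S$, the first reduction supplies $T \subseteq S'$ with $(s,T) \in U$; applying the second reduction's coverage to each $t \in T \subseteq S'$ yields sets $T'_t \subseteq S''$ with $(t, T'_t) \in U'$; taking $T'' = \bigcup_t T'_t \subseteq S''$ gives a witness in $U''$. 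The composed reduction function $r''$ runs $A$ on the instance, and whenever $A$ would query the $Q'$-oracle on some $(t, z)$ with $t \in T$, I instead invoke $A'$ (using the stored witness $(t, T'_t) \in U'$) to compute $Q'(t,z) = r'(t, T'_t, z)$, which $A'$ does by querying $Q''$ only on parameters in $T'_t \subseteq T''$. Correctness follows by substituting the correctness guarantees: $Q(s,y) = r(s,T,y)$, and each intermediate oracle answer equals the true $Q'(t,z)$ by correctness of $r'$, so the simulated computation of $A$ produces the correct value.

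I expect the main obstacle to be the bookkeeping needed to keep $U''$ computably enumerable and to keep the composed function FPT-computable. For computable enumerability, I would dovetail an enumeration of $U$ together with enumerations of $U'$: given an enumerated pair $(s, T) \in U$, I enumerate, for each $t \in T$, candidate pairs $(t, T'_t)$ from $U'$, and emit $(s, \bigcup_t T'_t)$ once a witness for every $t \in T$ has appeared; since $T$ is finite and each $U'$-enumeration is c.e., this search halts and the whole process is a legitimate enumeration of $U''$. For the running-time bound, I would note that the witness $(s, T'')$ passed into $r''$ carries enough information to recover appropriate witnesses $(t, T'_t) \in U'$ for each $t \in T$ (either by storing them explicitly in the encoding of the $U''$-element or by re-searching $U'$, whose cost depends only on the parameter), so that the polynomial-multiplied time bounds of $A$ and $A'$ compose into a polynomial-multiplied bound for the combined algorithm. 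The crux is that both parameterizations here are projections onto the ``parameter'' coordinates, so $f$-factors depending on the parameters multiply together into a single computable function of the parameter while the polynomial factors in $|y|$ compose into a single polynomial; this is exactly where the $(\pi_1,\pi_2)$-FPT-computability requirement is used, and verifying it carefully---rather than the conceptual composition---is the delicate part.
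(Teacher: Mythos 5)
Your composition argument is correct and is the standard (and, as far as the cited source goes, the intended) proof: the paper itself states Theorem~\ref{thm:transitivity} only by reference to \cite{ChenMengel14-pp-arxiv} and gives no proof, and the argument there is exactly this chaining of the two reductions, with $U''$ built by dovetailed enumeration and the inner oracle calls simulated by the second reduction. Your attention to the two genuinely delicate points --- keeping $U''$ computably enumerable and recovering the witnesses $(s,T)$ and $(t,T'_t)$ from the parameter $(s,T'')$ so that the $f$-factors and polynomial factors compose into a polynomial-multiplied bound --- is exactly where the care is needed, and your treatment of both is sound.
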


The following theorem
shows that, from a counting slice reduction,
one can obtain 
complexity results for the corresponding
parameterized problems.

\begin{theorem}
\label{thm:slice-red-gives-fpt-red}
\cite{ChenMengel14-pp-arxiv}
Let $Q[S]$ and $Q'[S']$ be case problems.
Suppose that $Q[S]$ counting slice reduces to $Q'[S']$,
and that both $S$ and $S'$ are computable.
Then $\param{Q[S]}$ counting FPT-reduces to $\param{Q'[S']}$.
\end{theorem}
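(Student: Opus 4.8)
The plan is to assemble the desired counting FPT-reduction directly from the ingredients of the given counting slice reduction, using the computability of $S$ and $S'$ to effectivize the coverage condition. Let $U$ and the partial function $r$ (with its algorithm $A$) witness the slice reduction from $Q[S]$ to $Q'[S']$. Write $\param{Q[S]} = (P, \pi_1)$ and $\param{Q'[S']} = (P', \pi_1)$, so that $P(s,y) = Q(s,y)$ for $s \in S$ and $P(s,y) = 0$ otherwise, and likewise for $P'$ with respect to $S'$. I must produce a computable $h \colon \Sigma^* \to \powfin(\Sigma^*)$ and an algorithm $B$ witnessing that $P$ is FPT-computable with respect to $\pi_1$ via $B$, where $B$ may query $P'(t,z)$ only when $t \in h(s)$ for the parameter $s = \pi_1(s,y)$ of its input.

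First I would define $h$. Fix an enumeration of the c.e.\ set $U$. For $s \in S$, let $T_s$ be the second component of the first pair $(s, T)$ appearing in this enumeration that additionally satisfies $T \subseteq S'$; set $h(s) = T_s$, and set $h(s) = \emptyset$ for $s \notin S$. This $h$ is computable: membership $s \in S$ is decidable since $S$ is computable; and when $s \in S$, the \emph{coverage} condition guarantees that some $(s, T) \in U$ with $T \subseteq S'$ exists, so the search through the enumeration terminates, each candidate test $T \subseteq S'$ being decidable since $S'$ is computable. Crucially, $h(s)$ depends only on $s$, and the time to compute it is a computable function of $s$ alone.

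Next I would describe $B$. On input $(s, y)$: if $s \notin S$, output $0$, which is correct since then $P(s,y) = 0$. Otherwise compute $T := h(s) = T_s$ and simulate $A$ on input $(s, T, y)$, answering each oracle query $Q'(t, z)$ it issues---where necessarily $t \in T = T_s$---by querying the oracle value $P'(t, z)$ instead. Since $T_s \subseteq S'$, we have $t \in S'$ and hence $P'(t, z) = Q'(t, z)$, so the simulation faithfully computes $r(s, T_s, y)$; finally $B$ outputs this value. Because $(s, T_s) \in U$, the \emph{correctness} condition gives $r(s, T_s, y) = Q(s, y) = P(s, y)$, as required. All oracle queries have the form $P'(t,z)$ with $t \in T_s = h(s) = h(\pi_1(s,y))$, so the oracle-access restriction in the definition of a counting FPT-reduction is respected.

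Finally I would verify the time bound. The costs of deciding $s \in S$ and of computing $h(s)$ are computable functions of the parameter $s$ alone, hence absorbed into the $f(\kappa(x))$ factor. The simulation of $A$ runs within the FPT bound for $r$ with respect to $(\pi_1, \pi_2)$, namely at most $f_A(s, T_s)\, p_A(|(s, T_s, y)|)$ for some computable $f_A$ and polynomial $p_A$; since $T_s$ is determined by $s$, the factor $f_A(s, T_s)$ and the part of the input size contributed by $s$ and $T_s$ are functions of $s$ only, so this bound has the form $f(s)\, p(|y|)$ for a suitable computable $f$ and polynomial $p$. Summing the three contributions yields a time bound that is polynomial-multiplied with respect to $\pi_1$, establishing the claimed reduction. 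I expect the only delicate point to be exactly this effectivization step---using computability of $S$ and $S'$ to locate a valid $T_s \subseteq S'$ out of the merely computably enumerable set $U$, while keeping the cost of doing so bounded by a function of the parameter; the remaining composition and timing arguments are routine.
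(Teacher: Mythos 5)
Your proof is correct, and it matches the intended argument: the paper itself states this theorem without proof (importing it from the cited prior work), and the standard proof there is exactly your construction --- use computability of $S$ to handle parameters outside $S$, use the coverage condition together with an enumeration of the c.e.\ set $U$ and decidability of $T \subseteq S'$ to compute a valid $h(s) = T_s$ in time depending only on $s$, and then simulate the slice reduction's algorithm, answering its $Q'$-queries via $P'$. The effectivization step you flag as delicate is indeed the only point where the computability hypotheses on $S$ and $S'$ are used, and you have handled it correctly.
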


\subsection{Classification of pp-formulas}
\label{sct:ppclassification}

We present the complexity classification of pp-formulas
previously presented in~\cite{ChenMengel14-pp-arxiv,ChenMengel15-pp-icdt}.
The following definitions are adapted from that article.
Let $(\rela,S)$ be a prenex pp-formula,
let $\reld$ be the core thereof, and let $G = (D,E)$ be the
graph of $\reld$.
An
\emph{$\exists$-component} of $(\rela,S)$ is a graph of the form 
$G[V']$ where there exists
$V \subseteq D$ that is the vertex set of a component of
$G[D \setminus S]$ and $V'$ is the union of $V$
with all vertices in $S$ having an edge to $V$.
Define $\contract(\rela,S)$ 
to be the graph on vertex set $S$
obtained by starting from $G[S]$ 
and adding an edge between any two vertices that appear together
in an $\exists$-component of $(\rela,S)$.

Let $\Phi$ be a set of prenex pp-formulas.
Let us say that $\Phi$ satisfies the
\emph{contraction condition} if
the graphs in the set $\contract(\Phi) := \{ \contract(\phi) ~|~ \phi \in \Phi \}$
are of bounded treewidth.
Let us say that $\Phi$ satisfies the
\emph{tractability condition}
if it satisfies the contraction condition 
and, in addition,
the cores of $\Phi$ are of bounded treewidth;
here, the treewidth of a prenex pp-formula is defined
as that of its graph. 
We omit the definition of treewidth,
as it is both well-known and not needed
to understand the main technical proof of this 
article (which is in Section~\ref{sect:proof-equiv-theorem}).


\begin{definition}
We define $\countp$ to be the problem that maps a pair
$(\phi(V),\relb)$ consisting of a fo-formula and a finite structure
to the value $|\phi(\relb)|$.
\end{definition}

\begin{theorem}
\label{thm:pp-tractability-condition} \cite{ChenMengel14-pp-arxiv}
Let $\Phi$ be a set of prenex pp-formulas
that satisfies the tractability condition.
Then, the restriction of $\param{\countp[\Phi]}$ to $\Phi \times \Sigma^*$
is an FPT-computable partial function.
\end{theorem}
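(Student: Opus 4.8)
The plan is to establish FPT-computability by giving an algorithm whose running time is polynomial-multiplied with respect to the parameterization $\pi_1$ (which extracts the formula $\phi(V)$). Since $\Phi$ satisfies the tractability condition, both the cores of $\Phi$ and the contracted graphs $\contract(\Phi)$ have bounded treewidth; the strategy is to exploit these two bounds to build an efficient dynamic-programming procedure over a tree decomposition. First I would reduce to the core: for a formula $(\rela,S) \in \Phi$ with core $\reld$, the counts $|\phi(\relb)|$ depend only on $\reld$ via the homomorphism characterization of Theorem~\ref{thm:ChandraMerlin}, so it suffices to count satisfying assignments $f : S \to B$ of the core formula, i.e., assignments extendable to homomorphisms from the core structure into $\relb$. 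This step costs only preprocessing time depending on $\phi$ (the parameter), which is harmless for FPT.

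The heart of the argument is the counting itself. The key structural idea is that the count factors through the $\exists$-components: an assignment $f : S \to B$ is satisfying precisely when, for each $\exists$-component, the partial assignment to the $S$-vertices incident to that component can be extended into $\relb$ over the quantified vertices of the component. I would formalize this by showing that $|\phi(\relb)|$ can be computed by a dynamic program over a tree decomposition of $\contract(\rela,S)$, whose width is bounded by hypothesis. At each bag, the state records an assignment of the bag's $S$-vertices into $B$; because the bag has bounded size, the number of states is $|B|^{O(1)}$ with the exponent controlled by the treewidth of $\contract$. The edges of $\contract$ are exactly the pairs of $S$-vertices co-occurring in an $\exists$-component, so every $\exists$-component's $S$-boundary lies within a single bag, and the per-bag feasibility test reduces to deciding extendability over that one $\exists$-component. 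This is where the bounded treewidth of the \emph{cores} enters: each $\exists$-component is a subgraph of the core, so it too has bounded treewidth, and extendability of a fixed boundary assignment can be tested (and the relevant weights accumulated) by a secondary dynamic program running in $|B|^{O(1)}$ time with the exponent bounded by the core's treewidth.

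The main obstacle I anticipate is bookkeeping the interaction between the two treewidth bounds cleanly, so that the feasibility weights computed per $\exists$-component compose correctly into the global count without double-counting the shared $S$-vertices. Concretely, the subtlety is that a single $S$-vertex may be incident to several $\exists$-components and also carry its own core constraints from $G[S]$; I would handle this by first computing, for each $\exists$-component and each assignment of its (bounded-size) $S$-boundary, a $0/1$ feasibility indicator via the core dynamic program, and then treating these indicators as constraints to be satisfied simultaneously in the $\contract$ dynamic program. Finally I would verify that the total running time is bounded by $f(\phi) \cdot |\relb|^{O(1)}$ for a computable $f$ absorbing all parameter-dependent quantities (core size, number of $\exists$-components, the treewidth bounds), which is exactly the polynomial-multiplied bound required for FPT-computability of the partial function on $\Phi \times \Sigma^*$.
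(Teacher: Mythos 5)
Your proposal is correct and follows essentially the same route as the proof in the cited reference \cite{ChenMengel14-pp-arxiv} (the paper itself only imports this theorem): reduce to the core, precompute for each $\exists$-component the extendability relation on its $S$-boundary (which has bounded size because it forms a clique in $\contract(\rela,S)$, hence lies in one bag), using the bounded treewidth of the core, and then count assignments by dynamic programming over a bounded-width tree decomposition of $\contract(\rela,S)$. The bookkeeping concerns you raise are handled exactly as you suggest, so there is no gap.
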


\begin{theorem} \cite{ChenMengel14-pp-arxiv}
\label{thm:pp-complexity-results}
Let $\Phi$ be a set of prenex pp-formulas of bounded arity
that does not satisfy the tractability condition.
\begin{enumerate}


\item If $\Phi$ satisfies the contraction condition,
then it holds that
$\countp[\Phi]$ and $\clique[\mathbb{N}]$ are interreducible,
  under
counting slice reductions.

\item 
Otherwise,
 there exists a counting slice reduction 
 from $\sclique[\mathbb{N}]$ to $\countp[\Phi]$.

\end{enumerate}
\end{theorem}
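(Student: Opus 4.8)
The plan is to reduce the whole classification to a clean counting problem over the free variables. First I would invoke Theorem~\ref{thm:ChandraMerlin} to replace each pp-formula by its core, an operation that preserves the set $\phi(\relb)$ of answers and hence the count $|\phi(\relb)|$; so I may assume throughout that the input formula $(\rela,S)$ already equals its core $(\reld,S)$. The key reformulation is that a map $f\colon S\to B$ is an answer exactly when two conditions hold: $f$ respects the atoms lying entirely within $S$, and, for every $\exists$-component $C$ of $(\reld,S)$ with $S$-boundary $S_C=S\cap V(C)$, the restriction $f|_{S_C}$ extends to a homomorphism on $C$. Writing $R_C\subseteq B^{S_C}$ for the set of extendable partial maps, counting the answers becomes counting the solutions of a constraint satisfaction instance whose variables are $S$ and whose constraint graph is precisely $\contract(\reld,S)$. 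This exposes the two independent sources of complexity: \emph{computing} the relations $R_C$, which is an extension-of-homomorphism question governed by the treewidth of the core, and \emph{solving} the resulting constraint instance, which is governed by the treewidth of $\contract$.

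For part~(1), where $\contract(\Phi)$ has bounded treewidth $w$ while the cores do not, I would establish interreducibility with $\clique[\mathbb{N}]$ in two directions. For the reduction to $\clique[\mathbb{N}]$, note that each boundary $S_C$ induces a clique in $\contract(\reld,S)$, so $|S_C|\leq w+1$ and each table $R_C$ has at most $|B|^{w+1}$ rows. Deciding membership of a single row is an extension-of-homomorphism decision question, which is interreducible with $\clique$ by Grohe's classification; a $\clique$ oracle therefore lets me tabulate every $R_C$ explicitly. Once the $R_C$ are in hand, the constraint instance over $S$ has a width-$w$ tree decomposition, so its solutions can be \emph{counted} in polynomial time by junction-tree dynamic programming with no further oracle calls. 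Conversely, since the tractability condition fails while the contraction condition holds, the unbounded treewidth is forced to sit inside an $\exists$-component; I would select a formula whose core contains an $\exists$-component of large treewidth and run the Grohe-style embedding of $\clique$ into the homomorphism-existence problem for that component, detecting a clique by testing whether the relevant value of $|\phi(\relb)|$ is nonzero.

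For part~(2), where $\contract(\Phi)$ has unbounded treewidth, the hardness instead comes from the constraint-solving side, and I would build a counting slice reduction from $\sclique[\mathbb{N}]$. Because the constraint graphs $\contract(\reld,S)$ have unbounded treewidth, the Excluded Grid Theorem supplies large grid minors in them, and grids are exactly the patterns into which the counting of $k$-cliques can be encoded. I would construct, for each target clique size $k$, a structure $\relb$ whose homomorphisms realize candidate cliques, arranging the gadgets so that $|\phi(\relb)|$ records the number of $k$-cliques up to multiplicities that are controllable and come from the quantified interior and from non-clique assignments to the free variables.

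The step I expect to be the main obstacle is that the reduction of part~(2) must preserve the \emph{exact} count, not merely the existence of a solution: the quantity $|\phi(\relb)|$ conflates the contribution of genuine clique encodings with spurious contributions from other assignments and from the existentially quantified variables. The plan to defeat this is the polynomial-interpolation technique sanctioned by the facts recalled under ``Polynomials'': I would set up a family of instances in which the desired clique count appears as a coefficient of a low-degree polynomial whose evaluations are furnished by oracle calls, and then recover that coefficient by interpolation. Finally, casting all of this in the prescribed format---exhibiting the computably enumerable language and verifying the coverage and correctness conditions of a counting slice reduction, and composing the separate pieces through the transitivity guaranteed by Theorem~\ref{thm:transitivity}---is bookkeeping that nonetheless has to be carried out with care.
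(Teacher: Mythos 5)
This theorem is not proved in the paper you are reading: it is quoted verbatim from the earlier work \cite{ChenMengel14-pp-arxiv,ChenMengel15-pp-icdt} and used here as a black box, so there is no in-paper proof to compare your attempt against. Measured against the proof in that cited work, your outline captures the correct overall architecture: reformulating $|\phi(\relb)|$ as the solution count of a constraint instance over $S$ whose constraint graph is $\contract(\rela,S)$, with the relations $R_C$ supplied by extension-of-homomorphism tests on the $\exists$-components; obtaining the upper bound of case (1) by tabulating the $R_C$ via a \clique{} oracle and then counting by bounded-treewidth dynamic programming; obtaining the lower bound of case (1) from Grohe-style hardness plus the trivial reduction of decision to counting; and deriving case (2) from unbounded treewidth of the contracts via grid minors.

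That said, as a proof it has two genuine gaps. First, in case (1) you apply Grohe's classification to "the homomorphism-existence problem for that component," but the treewidth hypothesis you actually have concerns the cores of the \emph{augmented} structures $\aug(\rela,S)$, not the cores of the plain sentences obtained by quantifying everything; the core of $\exists S\,\phi$ can collapse drastically even when $\aug(\rela,S)$ has a high-treewidth core, so one must use the version of the hardness result that accounts for the free variables being pinned (this is exactly why the paper defines cores through $\aug$), and one must also justify that some $\exists$-component family itself has unbounded treewidth and yields a class to which the hardness theorem applies. Second, and more seriously, case (2) is only a plan: asserting that grids "are exactly the patterns into which the counting of $k$-cliques can be encoded" and that interpolation will strip out spurious contributions does not yet produce a counting slice reduction. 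The actual argument (following Dalmau--Jonsson and Durand--Mengel) has to cope with the fact that an edge of $\contract(\rela,S)$ may be realized by an entire quantified subquery rather than an atom, and it recovers the clique count from homomorphism counts via inclusion--exclusion over partitions (passing from homomorphisms to injective homomorphisms), not merely by Vandermonde interpolation; making the multiplicities "controllable," as you put it, is precisely the content of that proof. So the proposal is a faithful road map of the known argument, but the hardest step is named rather than carried out.
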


We say that a set of formulas $\Phi$ has \emph{bounded arity} if
there exists a constant $k \geq 1$ that upper bounds the
arity of each relation symbol appearing in a formula in $\Phi$.

\section{Main theorems}


The following theorem, which we call the \emph{equivalence theorem}
and which is proved in Section~\ref{sect:proof-equiv-theorem},
is our  primary technical result; it is used to derive our
complexity trichotomy on ep-formulas from the
known complexity trichotomy on pp-formulas
(which was presented in Section~\ref{sct:ppclassification}).

\begin{theorem}
\label{thm:equivalence-theorem} 
(Equivalence theorem)
Let $\Phi$ be a set of ep-formulas.  
There exists a set $\Phi^+$ of prenex pp-formulas
with the following property: the two counting case problems
$\countp[\Phi]$ and $\countp[\Phi^+]$
are interreducible under counting slice reductions.
In particular, there exists an algorithm that computes,
given an ep-formula $\phi$, a finite set $\phi^+$ of prenex pp-formulas
such that for any set $\Phi$ of ep-formulas, the 
set $\Phi^+$ defined as $\bigcup \{ \phi^+ ~|~ \phi \in \Phi \}$
has the presented property.
\end{theorem}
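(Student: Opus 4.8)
The plan is to pass through a disjunctive normal form and then realize both directions of the interreduction via inclusion--exclusion, with the backward direction additionally using structural gadgets and polynomial interpolation. First I would fix, for each ep-formula $\phi$ with $\lib(\phi) = V$, a logically equivalent \emph{normalized} disjunctive form $\phi = \bigvee_{i=1}^m \psi_i$, where each $\psi_i$ is a prenex pp-formula with $\lib(\psi_i) = V$; this is computable from $\phi$ and, for every finite structure $\relb$, satisfies $|\phi(\relb)| = |\bigcup_{i=1}^m \psi_i(\relb)|$. Since $\countp$ is invariant under logical equivalence, it suffices to work with this form. Because a conjunction of prenex pp-formulas (after renaming quantified variables apart) is again logically equivalent to a single prenex pp-formula with the same liberal variables, for each nonempty $I \subseteq \{1, \ldots, m\}$ the intersection $\bigcap_{i\in I}\psi_i(\relb)$ is realized by such a pp-formula $\psi_I$ (i.e.\ $\psi_I(\relb) = \bigcap_{i \in I}\psi_i(\relb)$). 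I would then set $\phi^+ = \{\psi_I \mid \emptyset \neq I \subseteq \{1,\ldots,m\}\}$, a finite, computable set, and define $\Phi^+ = \bigcup_\phi \phi^+$; this yields the claimed algorithm and the per-formula structure of $\Phi^+$.

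For the forward reduction, from $\countp[\Phi]$ to $\countp[\Phi^+]$, the inclusion--exclusion principle gives
\[
|\phi(\relb)| = \sum_{\emptyset \neq I \subseteq \{1,\ldots,m\}} (-1)^{|I|+1} |\psi_I(\relb)|.
\]
I would take the computably enumerable language $U$ to consist of the pairs $(\phi, \phi^+)$, which settles coverage since $\phi^+ \subseteq \Phi^+$; the partial function $r$, on input $(\phi, \phi^+, \relb)$, recomputes the disjuncts from $\phi$, queries $|\psi_I(\relb)|$ for each $I$ (a legal query, as $\psi_I \in \phi^+$), and returns the signed sum. The number of queries $2^m - 1$ depends only on the parameter $\phi$, and each query together with the arithmetic is polynomial, so $r$ is FPT-computable with respect to $(\pi_1, \pi_2)$; correctness is exactly inclusion--exclusion.

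The backward reduction, from $\countp[\Phi^+]$ to $\countp[\Phi]$, is the main obstacle, since from an oracle for $\phi = \bigvee_i \psi_i$ one may only read off the \emph{union} count $|\bigcup_i \psi_i(\relb')|$, never the count of an individual intersection $\psi_I$, and the disjuncts share the fixed signature, so no disjunct can be syntactically suppressed. Here I would exploit that a counting slice reduction may query the ep-oracle on \emph{arbitrary} structures $\relb'$, not just on $\relb$. The plan is to build a family of modified structures from $\relb$---for instance by adjoining independent numbers of fresh isolated elements, and by taking suitable disjoint unions and products with gadgets---so that $|\phi(\relb')|$ becomes a multivariate polynomial in the gadget parameters whose monomials separate the contributions of the distinct terms $\psi_I$; evaluating the oracle at enough parameter values and applying the interpolation facts recalled in the preliminaries then recovers each $|\psi_I(\relb)|$ as a computable linear combination of oracle answers.

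The subtle points I expect to handle carefully are twofold. First, \emph{sentence disjuncts}: a satisfied pp-sentence disjunct forces $\psi_i(\relb') = (\relb')^V$, contributing a dominant term of magnitude $|B'|^{|V|}$, which is why the normalized form and a separate treatment of the all-free part (Section~\ref{subsect:all-free}) are needed to prevent these terms from collapsing the inversion. Second, I must ensure that the separating monomials are genuinely linearly independent, so that the associated interpolation system is invertible; establishing this independence is the crux of making the backward direction go through. Finally, transitivity of counting slice reducibility (Theorem~\ref{thm:transitivity}) lets me compose the normalization step, the all-free reduction, and the interpolation step into the two required slice reductions, completing the interreduction.
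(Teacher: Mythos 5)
Your high-level strategy (disjunctive normal form, inclusion--exclusion for the forward direction, product gadgets and Vandermonde interpolation for the backward direction) matches the paper's, but there are two genuine problems. The first is your definition of $\phi^+$ as \emph{all} $2^m-1$ conjunctions $\psi_I$. This makes the backward reduction false in general: if two conjunctions $\psi_I$ and $\psi_{I'}$ are counting equivalent with opposite signs, their contributions cancel out of the inclusion--exclusion sum, so the oracle $|\phi(\cdot)|$ carries no information whatsoever about $|\psi_I(\relb)|$, and no slice reduction from $\countp[\{\psi_I\}]$ to $\countp[\{\phi\}]$ can exist when $\psi_I$ is genuinely harder than the surviving terms. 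This is exactly the phenomenon of Example~\ref{ex:eqsecond}, where $\phi_1\land\phi_2$ and $\phi_1\land\phi_2\land\phi_3$ cancel and are the only terms of treewidth $2$; the correctness of the trichotomy depends on them \emph{not} being in $\phi^+$. The paper therefore defines $\phi^*$ only after merging counting-equivalent conjunctions and deleting the terms whose net coefficient is zero (Proposition~\ref{prop:sumwithcoeffs}), and further removes from $\phi^+$ the formulas entailed by sentence disjuncts. This pruning is not an optimization but is required for the equivalence to hold, and deciding when to merge requires the characterization of counting equivalence via renaming equivalence (Theorem~\ref{thm:renamingandcounting}), which your proposal does not supply.

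The second problem is that the step you yourself identify as ``the crux'' --- inverting the interpolation system to recover each individual $|\psi_I(\relb)|$ --- is precisely where all the technical work of the paper lives, and you give no argument for it. Interpolating $|\phi(\relb\times\relc^\ell)|$ in powers of $|\psi(\relc)|$ only yields a Vandermonde system if one can exhibit a single structure $\relc$ on which all surviving pp-formulas have positive and pairwise distinct counts; the paper proves this exists only up to \emph{semi-counting equivalence} (Lemma~\ref{lem:semigeneral}, itself a delicate induction using products $\reld^\ell\times\reld'$ and disjoint unions with $\reli$), so the interpolation recovers only the class sums $\sum_{\psi\in\phi^*_j}c_\psi|\psi(\relb)|$. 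Splitting these sums into individual terms needs a further recursive argument (Lemma~\ref{lem:splitsemieq}) resting on the fact that semi-counting equivalent but not counting equivalent formulas are not homomorphically equivalent (Proposition~\ref{prop:nothomeq}), which lets one choose a minimal structure in the homomorphism order that satisfies exactly one formula in the class. Additionally, the handling of sentence disjuncts in the backward direction requires a separate argument (testing whether $|\phi(\rela\times\relb)|$ attains the maximum count), not merely the ``separate treatment'' you allude to. As written, the proposal establishes the easy forward direction but leaves the backward direction as a sketch whose central claims are unproved and whose choice of $\Phi^+$ makes it unprovable.
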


We now state our trichotomy theorem
on the complexity of counting answers to ep-formulas,
and show how to prove it using
the equivalence theorem.

\begin{theorem}
\label{thm:trichotomy}
(Trichotomy theorem)
Let $\Phi$ be a computable set of ep-formulas of bounded arity,
and let $\Phi^+$ be the set of pp-formulas
given by Theorem~\ref{thm:equivalence-theorem}.
\begin{enumerate}

\item If $\Phi^+$ satisfies the tractability condition,
then it holds that $\param{\countp[\Phi]}$ is in FPT.

\item If $\Phi^+$ does not satisfy the tractability condition
but satisfies the contraction condition,
then it holds that $\param{\countp[\Phi]}$ is interreducible with
$\pclique$ under counting FPT-reduction.

\item Otherwise, there is a counting FPT-reduction from
the problem 
$\psclique$ to $\param{\countp[\Phi]}$.

\end{enumerate}
\end{theorem}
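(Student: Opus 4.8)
The plan is to derive each of the three cases by combining the equivalence theorem (Theorem~\ref{thm:equivalence-theorem}) with the matching case of the pp\nobreakdash-classification (Theorems~\ref{thm:pp-tractability-condition} and~\ref{thm:pp-complexity-results}), using transitivity of counting slice reducibility (Theorem~\ref{thm:transitivity}) to glue reductions together and Theorem~\ref{thm:slice-red-gives-fpt-red} to pass from slice reductions to counting FPT\nobreakdash-reductions. Before splitting into cases I would record two preliminary observations. First, every relation symbol occurring in a formula of $\phi^+$ already occurs in $\phi$ (the inclusion--exclusion construction only forms conjunctions of the pp\nobreakdash-disjuncts of the original ep\nobreakdash-formulas), so $\Phi^+$ inherits bounded arity from $\Phi$; this is what licenses the use of Theorem~\ref{thm:pp-complexity-results}. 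Second, the three hypotheses ``tractability condition'', ``contraction but not tractability'', and ``neither'' are mutually exclusive and exhaustive, so they align precisely with the three outcomes of the pp\nobreakdash-classification applied to $\Phi^+$.

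For case~(2), since $\Phi^+$ satisfies the contraction condition but not the tractability condition, Theorem~\ref{thm:pp-complexity-results}(1) gives that $\countp[\Phi^+]$ and $\clique[\mathbb{N}]$ are interreducible under counting slice reductions, while the equivalence theorem gives the same for $\countp[\Phi]$ and $\countp[\Phi^+]$. Composing these via Theorem~\ref{thm:transitivity} yields that $\countp[\Phi]$ and $\clique[\mathbb{N}]$ are interreducible under counting slice reductions. The point of composing first is that this step eliminates $\Phi^+$ entirely, so that when I invoke Theorem~\ref{thm:slice-red-gives-fpt-red} I only need $\Phi$ and $\mathbb{N}$ to be computable, both of which hold. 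This gives interreducibility of $\param{\countp[\Phi]}$ and $\param{\clique[\mathbb{N}]}$ under counting FPT\nobreakdash-reductions; and since $\mathbb{N}$ permits every parameter value, $\param{\clique[\mathbb{N}]}$ coincides with $\pclique$. Case~(3) is identical in spirit: from Theorem~\ref{thm:pp-complexity-results}(2) I obtain a counting slice reduction from $\sclique[\mathbb{N}]$ to $\countp[\Phi^+]$, compose it with the equivalence\nobreakdash-theorem slice reduction from $\countp[\Phi^+]$ to $\countp[\Phi]$ through transitivity, and convert the resulting reduction from $\sclique[\mathbb{N}]$ to $\countp[\Phi]$ into a counting FPT\nobreakdash-reduction from $\psclique$ to $\param{\countp[\Phi]}$ via Theorem~\ref{thm:slice-red-gives-fpt-red}, again using only computability of $\mathbb{N}$ and $\Phi$.

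Case~(1) needs a slightly different treatment, as the tractability conclusion is not phrased as a reduction to a fixed computable\nobreakdash-set problem; here I would build an FPT algorithm for $\param{\countp[\Phi]}$ directly. On input $(\phi,\relb)$, first decide whether $\phi \in \Phi$ using computability of $\Phi$, outputting $0$ otherwise. If $\phi \in \Phi$, run the algorithm furnished by the final clause of Theorem~\ref{thm:equivalence-theorem} to compute the finite set $\phi^+$ of pp\nobreakdash-formulas, which by construction satisfies $\phi^+ \subseteq \Phi^+$ and serves as the coverage witness of the slice reduction. Since $\Phi^+$ satisfies the tractability condition, Theorem~\ref{thm:pp-tractability-condition} supplies an FPT algorithm computing $|\psi(\relc)|$ for each $\psi \in \Phi^+$ and each finite structure $\relc$, in particular for each $\psi \in \phi^+$; feeding these values as oracle answers into the reduction function of the equivalence\nobreakdash-theorem slice reduction recovers $|\phi(\relb)|$ within the required polynomial\nobreakdash-multiplied time. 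Using $\phi^+$ itself as the coverage witness is exactly what lets me avoid ever testing membership in $\Phi^+$.

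The main obstacle, and the step I would track most carefully, is this computability bookkeeping: $\Phi^+$ is only guaranteed to be computably enumerable, being the image of the computable set $\Phi$ under the map $\phi \mapsto \phi^+$, whereas Theorem~\ref{thm:slice-red-gives-fpt-red} nominally requires computable parameter sets on both sides. The resolution, uniform across the three cases, is to never demand that $\Phi^+$ be decidable: in cases~(2) and~(3) I first compose the slice reductions so that $\Phi^+$ disappears and only $\Phi$ and $\mathbb{N}$ remain, and in case~(1) I use the explicitly computed witnesses $\phi^+$ rather than any membership test for $\Phi^+$. I would also verify that the oracle queries issued by the composed reductions remain within the permitted parameter sets, which is where the coverage and correctness conditions of the slice\nobreakdash-reduction definition must be propagated through the transitivity argument.
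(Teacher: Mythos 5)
Your proposal is correct and follows essentially the same route as the paper's proof: case (1) is handled by running the equivalence-theorem slice reduction with oracle queries answered by the FPT algorithm of Theorem~\ref{thm:pp-tractability-condition}, and cases (2) and (3) by composing the equivalence-theorem reductions with those of Theorem~\ref{thm:pp-complexity-results} via transitivity and then applying Theorem~\ref{thm:slice-red-gives-fpt-red} to the composed reductions, whose endpoint parameter sets ($\Phi$ and $\mathbb{N}$) are computable. Your two preliminary observations---that $\Phi^+$ inherits bounded arity and that one must avoid requiring decidability of $\Phi^+$---are correct and make explicit bookkeeping that the paper leaves implicit.
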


\begin{proof}
For (1), 
we use the counting slice reduction $(U, r)$ from
$\countp[\Phi]$ to $\countp[\Phi^+]$
given by Theorem~\ref{thm:equivalence-theorem}.
In particular, given as input $(\phi, \relb)$,
it is first checked if $\phi \in \Phi$; if not, $0$ is output.
Otherwise,
the algorithm for $r$
is invoked on $(\phi, \phi^+, \relb)$,
where $\phi^+$ is as defined in 
the statement of Theorem~\ref{thm:equivalence-theorem};
queries to $\countp(\psi, \relb)$ where $\psi \in \Phi^+$ are
resolved according to the algorithm of
Theorem~\ref{thm:pp-tractability-condition}.

For (2) and (3), we make use of the result
(Theorem~\ref{thm:equivalence-theorem})
that the problems $\countp[\Phi]$ and $\countp[\Phi^+]$ 
are interreducible under counting slice reductions.
For (2), we have from Theorem~\ref{thm:pp-complexity-results}
that $\countp[\Phi^+]$ and $\clique[\N]$ are interreducible under
counting slice reductions.
Hence, we obtain that the problems $\clique[\N]$ and $\countp[\Phi]$ are interreducible
under counting slice reductions, and the result follows from 
Theorem~\ref{thm:slice-red-gives-fpt-red}.
For (3), we have from Theorem~\ref{thm:pp-complexity-results}
that there is a counting slice reduction
from $\sclique[\N]$ to $\countp[\Phi^+]$, and hence
from $\sclique[\N]$ to $\countp[\Phi]$;
the result then follows from 
Theorem~\ref{thm:slice-red-gives-fpt-red}.
\end{proof}

Let us remark that when case (2) applies, 
a consequence of this theorem is that
the problem 
$\param{\countp[\Phi]}$ is not in FPT unless W[1] is in FPT,
since $\pclique$ is W[1]-complete;
in a similar fashion, when case (3) applies,
the problem 
$\param{\countp[\Phi]}$ is not in FPT unless $\sh$W[1] is in FPT,
since $\psclique$ is $\sh$W[1]-complete.


\section{Examples}\label{sct:examples}

Before proving the equivalence theorem in full generality,
we discuss some example
ep-formulas to illustrate and preview 
some of the issues and difficulties
with which the argument needs to cope.

\begin{example}\label{ex:eqfirst}
 Consider the formula 
 \[\phi(w,x,y,z) := E(x,y) \land (E(w,x) \lor (E(y, z) \land E(z,z))).\]
 As a first simplification step, we bring 
 disjunction to the outermost level in $\phi$:
 \begin{align*} &\phi(w,x,y,z) \\&\equiv(E(x,y) \land E(w,x)) \lor (E(x,y) \land E(y,z) \land E(z,z)).\end{align*}
 Now let us set $\phi_1(w,x,y,z)\equiv E(x,y) \land E(w,x)$ and also set $\phi_2(w,x,y,z)\equiv E(x,y) \land E(y,z) \land E(z,z)$.
 We can use inclusion-exclusion to count the number of satisfying assignments of $\phi$ on a structure $\relb$ by 
 \[|\phi(\relb)| = |\phi_1(\relb)| + |\phi_2(\relb)| - |(\phi_1\land \phi_2)(\relb)|.\]

One point to observe is that, in this last expression,
the count $|\phi_1(\relb)|$ needs to be determined with respect to
its set of liberal variabes $\lib(\phi_1) = \{ w, x, y, z \}$, even though $z$
does not appear in any atom of $\phi_1$.
If the count $|\phi_1(\relb)|$ is not computed in this way,
the above expression for $|\phi(\relb)|$ fails
to hold in general.
The situation is analogous for the formula $\phi_2$, where 
$w$ does not appear in any atom.
 %
\end{example}

\begin{example}\label{ex:eqsecond}
 In general, if we are given an ep-for\-mula $\phi= \phi_1 \lor \ldots \lor \phi_n$ where the $\phi_i$ are pp-formulas, then to compute the count 
 $|\phi(\relb)|$ of $\phi$ relative to $\relb$,
 it suffices to know the count  for each of the $2^n - 1$ pp-formulas 
 obtained by taking a conjunction of a non-empty subset of the $\phi_i$. 
 In this example, we will see that, in fact, one does not always need to consider all of these conjunctions. 
 To this end, set $V = \{ w,x,y,z \}$ and set
$$\phi(V)= \phi_1(V) \lor \phi_2(V) \lor \phi_3(V)$$ 
where 
$\phi_1(V) = E(x,y) \land E(y,z)$, 
$\phi_2(V) = E(z,w) \land E(w,x)$ and 
$\phi_3(V) = E(w,x) \land E(x,y)$. 
Applying in\-clu\-sion-ex\-clu\-sion, we obtain
 \begin{align*}|\phi(\relb)| = &|\phi_1(\relb)| + |\phi_2(\relb)| + |\phi_3(\relb)|\\
  &- |(\phi_1\land \phi_2)(\relb)| - |(\phi_1\land \phi_3)(\relb)|  \\&- |(\phi_2 \land \phi_3)(\relb)|+ |(\phi_1 \land \phi_2 \land \phi_3)(\relb)|. \end{align*}

 Now observe that the formulas $\phi_1$, $\phi_2$ and $\phi_3$ 
are actually equivalent to each other up to renaming variables;
consequently,  these formulas
are equivalent in that,
for any structure $\relb$,
they
yield the same count: $|\phi_1(\relb)| = |\phi_2(\relb)| = |\phi_3(\relb)|$.
In Section~\ref{subsect:counting-equivalence}, we formalize
and give a characterization of this notion of equivalence
(on pp-formulas).
The formulas $\phi_1 \land \phi_3$ and $\phi_2 \land \phi_3$
are also equivalent in this sense.  We may thus obtain
the following expression for $|\phi(\relb)|$.
 \begin{align*}|\phi(\relb)| = &3 \cdot |\phi_1(\relb)| - |(\phi_1\land \phi_2)(\relb)| \\&- 2\cdot |(\phi_1\land \phi_3)(\relb)| + 
 |(\phi_1 \land \phi_2 \land \phi_3)(\relb)|. \end{align*}

So far, we have only unified formulas that are equivalent up to renaming variables.  
In our parameterized complexity setting where $\phi$ is the parameter,
this does not yield a significant decrease
in the complexity of computing  $|\phi(\relb)|$.
However, we will now observe a simplification that is more substantial
in this sense.  
Namely, one can verify that the formulas
$\phi_1 \land \phi_2$ and $\phi_1\land \phi_2\land \phi_3$
are identical.  So, if we identify their terms in this last 
expression for $|\phi(\relb)|$, we obtain a cancellation
and arrive to the following expression:
\[|\phi(\relb)| = 3 \cdot |\phi_1(\relb)| - 2\cdot |(\phi_1\land \phi_3)(\relb)| . \]

The savings obtained by observing this cancellation are 
significant, in the following sense.
The  pp-formulas $\phi_1 \land \phi_2$
and $\phi_1 \land \phi_2 \land \phi_3$, 
which were cancelled,
were the only formulas in the expression for $|\phi(\relb)|$
which did not have treewidth $1$; they had treewidth $2$.
As it is known that the runtime of evaluation algorithms
for quantifier-free pp-formulas scales with their treewidth~\cite{Marx10-canyoubeat},
this reduction in treewidth yields a superior runtime for evaluating 
$|\phi(\relb)|$.
%
%
\end{example}

As we have seen in the above examples, 
counting on an ep-formula can, via inclusion-exclusion,
reduce to counting on a finite set of pp-formulas.
(This is carried out in our argument; see Section~\ref{subsect:all-free}).
As just seen in Example~\ref{ex:eqsecond}, there can be some subtlety
in choosing a desirable set of pp-formulas to reduce to.
One question not addressed so far is how one can reduce
from counting on a such obtained set of pp-formulas
to counting on the original ep-formula.
To this end, let us revisit our first example.

\begin{example}\label{ex:eqfirstrevisited}
 Let us consider again the formulas of Example~\ref{ex:eqfirst}. 
  Assume that we are given access to an oracle 
  that lets us compute $|\phi(\reld)|$,
  for any structure $\reld$ of our choice. 
  We will see that, given a structure $\relb$, 
  we can compute $|\phi_1(\relb)|$, $|\phi_2(\relb)|$, and $|(\phi_1\land \phi_2)(\relb)|$ efficiently using this oracle. 

  To see this, consider the structure $\relc$ with 
universe $C = \{ 1, 2, 3, 4 \}$ and
  $E^\relc = \{(1,2),(2,3),(3,4),(4,4)\}$. It is easy to check that the formulas $\phi_1$, $\phi_2$ and $\phi_1\land \phi_2$ all have a different number of answers with respect to $\relc$.  
  Now note 
  that for every pp-formula $\psi$ and every pair of structures 
  $\reld_1$, $\reld_2$ we have 
  $|\psi(\reld_1\times \reld_2)|= |\psi(\reld_1)| \cdot |\psi(\reld_2)|$. Querying the oracle for $|\phi(\cdot)|$ 
  on $\relb \times \relc^i$ for 
  the values $i=0, 1, 2$,
   we obtain the linear system
 \[ A\begin{pmatrix} |\phi_1(\relb)|\\ |\phi_2(\relb)|\\ -|(\phi_1\land \phi_2)(\relb)|\end{pmatrix} = \begin{pmatrix}(\phi(\relb)\\ \phi(\relb \times\relc) \\ \phi(\relb \times\relc^2)\end{pmatrix}\] with \[A = \begin{pmatrix} 1 & 1 & 1 \\ |\phi_1(\relc)| & |\phi_2(\relc)| & |(\phi_1\land \phi_2)(\relc)|\\ |\phi_1(\relc)|^2 & |\phi_2(\relc)|^2 & |(\phi_1\land \phi_2)(\relc)|^2
    \end{pmatrix}.\]
    Note that the entries of 
    $A$
    can be computed efficiently, and  the vector on the right-hand-side of the equation
    can be provided by our oracle. The matrix $A$ is a Vandermonde matrix,
    as a consequence of the choice of $\relc$. 
Thus, the system has a unique solution and can be solved to 
 determine $|\phi_1(\relb)|$, $|\phi_2(\relb)|$, and $|(\phi_1\land \phi_2)(\relb)|$, as desired.
\end{example}

In Example~\ref{ex:eqfirstrevisited} we have seen that,
for the particular ep-formula $\phi$ discussed, counting on $\phi$ is 
in a certain sense interreducible with counting
on the pp-formulas 
$$\{ \phi_1, \phi_2, \phi_1 \wedge \phi_2 \}.$$
The statement of the
equivalence theorem (Theorem~\ref{thm:equivalence-theorem})
asserts that for \emph{any} ep-formula $\phi$, there 
exists a finite set $\phi^+$ of pp-formulas such that
one has this interreducibility.

\section{Proof of equivalence theorem}
\label{sect:proof-equiv-theorem}


In this section, we give a decidable characterization
of \emph{counting equivalence} 
(Section~\ref{subsect:counting-equivalence});
we then study a relaxation thereof which we call
\emph{semi-counting equivalence}
(Section~\ref{subsect:semi-counting-equivalence});
we prove the equivalence theorem
in the particular case of all-free ep-formulas
(Section~\ref{subsect:all-free});
and, we end by proving the equivalence theorem
in its full generality
(Section~\ref{subsect:general-equivalence-theorem}).
Throughout this section, we generally assume pp-formulas to be prenex.

\subsection{Counting equivalence}
\label{subsect:counting-equivalence}

As we have seen in the examples of Section~\ref{sct:examples}, it will be important to see when two different pp-formulas give same number of answers for every structure, because it will allow us to make simplifications in formulas we get by inclusion-exclusion. To this end, we make the following definition.

\begin{definition}
Define two fo-formulas $\phi(V), \phi'(V')$ to be
 \emph{counting equivalent} if they are over the same vocabulary $\tau$
 and for each finite $\tau$-structure $\relb$ 
 it holds that $|\phi(\relb)| = |\phi'(\relb)|$.
\end{definition}

In this subsection, we characterize counting equivalence 
for pp-formulas. To approach the characterization, we start off with an example.

\begin{example}\label{ex:rename}
It is apparent that logically equivalent formulas are counting equivalent, but the converse direction is not true. To see this, consider the pp-formulas $\phi_1(x,y) = E(x, y)$ and $\phi_2(w,z)= E(w,z)$. Obviously, $\phi_1$ and $\phi_2$ are counting equivalent (they just count the number of tuples in the relation $E$ of a structure $\relb$). But $\phi_1$ and $\phi_2$ are not logically equivalent;
indeed, the assignments in $\phi_1(\relb)$ and $\phi_2(\relb)$ assign values to different variables. 

Note that one way of witnessing the counting equivalence of $\phi_1$ and $\phi_2$ is simply renaming the variable $w$ to $x$ and $z$ to $y$ to get equivalent formulas. Since this syntactic renaming 
obviously does not change the number of satisfying assignments, 
one can conclude that $\phi_1$ and $\phi_2$ are counting equivalent.
\end{example}

Example~\ref{ex:rename} motivates the following definition.

\begin{definition}
We say that
two pp-formulas
$$(\rela, S), (\rela', S')$$
 over the same signature
 are \emph{renaming equivalent}
if there exist surjections 
$h: S \to S'$ and $h': S' \to S$
 that can be extended to homomorphisms 
 $\bar{h} : \rela \rightarrow \rela'$ and 
 $\bar{h'}:\rela'\rightarrow \rela$, respectively.
\end{definition}

Informally speaking, on pp-formulas, 
%
two formulas are renaming equivalent
if they become logically equivalent after a renaming of variables,
as occurred in 
Example~\ref{ex:rename}.
Hence, renaming equivalence is a relaxation of logical equivalence.
Recall that logical equivalence of pp-formulas was characterized,
in Theorem~\ref{thm:ChandraMerlin}.

The main theorem of this subsection is that renaming equivalence does not only imply counting equivalence but is actually equivalent to it.

\begin{theorem}\label{thm:renamingandcounting}
Two pp-formulas 
$$\phi_1(S_1), \phi_2(S_2)$$ are counting equivalent
if and only if they are renaming equivalent.
\end{theorem}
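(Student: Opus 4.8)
The plan is to prove the two directions separately, with the interesting content residing in the forward direction (counting equivalence implies renaming equivalence), since the backward direction is essentially immediate. For the backward direction, suppose $\phi_1(S_1)$ and $\phi_2(S_2)$ are renaming equivalent via surjections $h : S_1 \to S_2$ and $h' : S_2 \to S_1$ extending to homomorphisms $\bar h$ and $\bar h'$. A surjection between the liberal variable sets that extends to a homomorphism witnesses that after renaming the formulas become logically equivalent; since renaming variables is a bijective relabeling of the coordinates of an assignment, it cannot change the number of satisfying assignments on any structure $\relb$. Hence $|\phi_1(\relb)| = |\phi_2(\relb)|$ for all finite $\relb$, giving counting equivalence. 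I would spell out that the existence of surjections in both directions forces $|S_1| = |S_2|$, so the renaming is in fact a bijection on liberal variables, which is what makes the assignment counts match exactly.

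The forward direction is the main obstacle and needs a separating structure. The strategy I would use is to exploit the multiplicativity of counts under direct (categorical) products, namely the fact recalled in Example~\ref{ex:eqfirstrevisited} that $|\psi(\reld_1 \times \reld_2)| = |\psi(\reld_1)| \cdot |\psi(\reld_2)|$ for any pp-formula $\psi$. This tells me that the function $\reld \mapsto |\psi(\reld)|$ is a multiplicative invariant, and such invariants for pp-formulas (equivalently, for homomorphism-counting with some free variables) are controlled by homomorphism counts from the augmented structure. The key classical tool I would invoke is that the number of satisfying assignments $|\phi(\relb)|$ equals the number of homomorphisms from $\aug(\rela, S)$ into $\relb$ that respect the singleton relations $R_a$; more precisely, it is a homomorphism count governed by the core of $\aug(\rela, S)$. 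Two structures determining the same homomorphism-count function into all finite $\relb$ must, by the theory of homomorphism vectors (the fact that homomorphism counts into all structures determine a structure up to isomorphism, after passing to cores), have isomorphic cores. Thus I would reduce counting equivalence to an isomorphism of the cores of $\aug(\phi_1)$ and $\aug(\phi_2)$, and then extract the renaming.

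Concretely, the plan is: first pass to cores $\reld_1$ of $\aug(\rela_1, S_1)$ and $\reld_2$ of $\aug(\rela_2, S_2)$, using that counting equivalence is preserved and that $|\phi_i(\relb)|$ is determined by a homomorphism count from these cores. Then show that equality of counts on \emph{all} finite $\relb$ forces $\reld_1 \cong \reld_2$ as structures over the augmented vocabulary $\tau \cup \{R_a\}$. The subtle point is that the augmenting relations $R_a$ index the liberal variables, so an isomorphism of augmented cores must match up the distinguished elements coming from $S_1$ with those coming from $S_2$; this matching is exactly the surjection (in fact bijection) $h$ on liberal variables, and because it comes from a homomorphism of the underlying $\tau$-reducts it extends to homomorphisms $\bar h$, $\bar{h'}$ between $\rela_1$ and $\rela_2$ as required by renaming equivalence. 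I expect the genuine difficulty to lie in establishing that equal counts on all $\relb$ force the augmented cores to be isomorphic rather than merely homomorphically equivalent --- this is where multiplicativity under products and the linear-independence/Vandermonde style argument (querying on powers $\relb \times \relc^i$, as previewed in Example~\ref{ex:eqfirstrevisited}) must be leveraged to separate non-isomorphic candidates, possibly by building a structure on which two putatively distinct cores disagree. I would handle the liberal-but-non-occurring variables (like $z$ in $\phi_1$ of Example~\ref{ex:eqfirst}) by noting they contribute a uniform multiplicative factor of $|B|$ to every count, so they constrain $|S_1| = |S_2|$ and can be tracked separately without affecting the core-isomorphism argument.
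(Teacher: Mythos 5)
Your backward direction is fine and matches the paper's. The forward direction, however, has a genuine gap, and it sits exactly where you admit "the genuine difficulty" lies. Your plan rests on the claim that $|\phi(\relb)|$ "equals the number of homomorphisms from $\aug(\rela,S)$ into $\relb$," so that a Lovász-type theorem on homomorphism-count profiles can be invoked as a black box. This identification is false: $\relb$ does not even interpret the augmented relations $R_a$, and, more substantively, $|\phi(\relb)|$ counts maps $f\colon S\to B$ that admit \emph{at least one} extension to a homomorphism $\rela\to\relb$, whereas any homomorphism count would count each such $f$ with multiplicity equal to its number of extensions. (Compare $\phi(x)=\exists y\,E(x,y)$: the answer count is the number of elements with positive out-degree, while the homomorphism count is the number of edges; these have genuinely different profiles.) Consequently the classical "same counts into all structures implies isomorphism" theorem --- which, incidentally, you misstate as giving isomorphism only "up to cores," when the profile theorems give isomorphism on the nose --- does not apply to the function $\relb\mapsto|\phi(\relb)|$, and no off-the-shelf result does. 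The entire technical content of the theorem is precisely the adaptation of that machinery to extendable-partial-map counts, and your proposal defers it rather than supplying it.

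Concretely, what is missing is the paper's surjection-counting argument: assuming $\surj(\rela_1,\rela_2,S)=\emptyset$, one writes the number of extendable maps $S\to S$ that are surjective as an inclusion--exclusion sum over $T\subseteq S$ of the numbers $|\hom_T(\rela_1,\rela_2,S)|$ of extendable maps landing in $T$, and then recovers each $|\hom_T|$ from the values $|\phi_1(\reld_{j,T})|$ on blow-up structures $\reld_{j,T}$ (each element of $T$ duplicated $j$ times) via a Vandermonde system --- the point being that a map sending exactly $i$ liberal variables into $T$ lifts to $j^i$ extendable maps into the blow-up. Running the same computation for $\phi_2$ against $\rela_2$ and using counting equivalence forces $|\surj(\rela_1,\rela_2,S)|=|\surj(\rela_2,\rela_2,S)|\ge 1$ (the identity), a contradiction. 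Your mention of products $\relb\times\relc^i$ and Vandermonde matrices is in the right spirit but is aimed at a different separation task (distinguishing finitely many formulas on one structure, as in Example 4.3); it does not substitute for the blow-up/inclusion--exclusion step, and without that step the proof does not go through.
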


Note that Theorem~\ref{thm:renamingandcounting} gives a syntactic/algebraic characterization of counting equivalence which makes counting equivalence decidable by a straightforward algorithm and in fact even puts it into $\mathrm{NP}$.

Before we prove Theorem \ref{thm:renamingandcounting}, 
 we start off with an simple observation that will be helpful in the proof.

\begin{observation}\label{obs:samevariables}
 Let $\phi$ and $\phi'$ be counting equivalent pp-formulas. Then $|\lib(\phi)| =|\lib(\phi')|$.
\end{observation}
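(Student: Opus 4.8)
The plan is to prove that if $\phi$ and $\phi'$ are counting equivalent pp-formulas, then $|\lib(\phi)| = |\lib(\phi')|$, by exhibiting a single structure on which the count directly encodes the size of the liberal-variable set. First I would recall that for a pp-formula $\phi(V)$ with $V = \lib(\phi)$, the count $|\phi(\relb)|$ ranges over assignments $f \colon V \to B$, so that when $\relb$ has a universe large enough and the formula is trivially satisfiable, the count should grow like $|B|^{|V|}$. The cleanest way to isolate the exponent $|V|$ is to evaluate on the structures $\reli_\tau$ introduced in the Preliminaries, which have a single-element universe $\{a\}$ with every relation symbol interpreted as the all-$a$ tuple.

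The key observation is that $\reli_\tau$ satisfies every pp-formula under the unique assignment sending all variables to $a$, so $|\phi(\reli_\tau)| = 1$ for any pp-formula $\phi$, regardless of $|\lib(\phi)|$ — this shows the single-element structure is \emph{not} enough on its own, and the exponent must be extracted by comparing counts across structures of growing size. The natural fix is to use a structure $\relb_n$ whose universe has $n$ elements and on which the formula behaves like a free count: concretely, I would take a disjoint-union-style or product construction so that $|\phi(\relb_n)| = n^{|\lib(\phi)|}$ (for instance by taking $\relb_n$ to be the $n$-fold disjoint copies of $\reli_\tau$, or by using that $|\phi(\reld_1 \times \reld_2)| = |\phi(\reld_1)| \cdot |\phi(\reld_2)|$ together with a one-element base so that $|\phi(\reli_\tau^{\,n})| = 1^{|\lib(\phi)|} = 1$, which again collapses). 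Because the product of one-element structures collapses the count, the right move is to realize $|\lib(\phi)|$ as the degree of a polynomial: pick a family of structures $\relb_n$ where $\phi$ is unconditionally satisfiable on every assignment, so that $|\phi(\relb_n)|$ is exactly $n^{|\lib(\phi)|}$ as a function of $n$.

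Then counting equivalence forces $n^{|\lib(\phi)|} = |\phi(\relb_n)| = |\phi'(\relb_n)| = n^{|\lib(\phi')|}$ for all $n$, and since $n^{d}$ as a function of $n$ determines $d$ uniquely (for $n \geq 2$ the values differ once the exponents differ), we conclude $|\lib(\phi)| = |\lib(\phi')|$. To make the construction of such a $\relb_n$ clean, I would let $\relb_n$ be the structure with universe $\{1, \ldots, n\}$ in which every relation symbol $R$ of arity $k$ has $R^{\relb_n} = \{1,\ldots,n\}^k$, i.e.\ the \emph{complete} relation; every atom is then satisfied by every assignment, every pp-formula is satisfied by every assignment to $\lib(\phi)$, and therefore $|\phi(\relb_n)| = n^{|\lib(\phi)|}$ exactly.

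The main obstacle I anticipate is not any deep combinatorics but rather making sure the chosen structure genuinely yields the clean count $n^{|\lib(\phi)|}$ while respecting the bookkeeping about liberal versus free variables: one must confirm that existentially quantified variables contribute a factor that does not interfere with the exponent (on the complete structure each quantified variable is satisfiable but is \emph{not} counted, so it contributes a multiplicative factor of $1$, leaving exactly the liberal variables as the free dimensions). Verifying that the complete structure satisfies $\relb, f \models \phi$ for every assignment $f$ on $\lib(\phi)$ is immediate from the definition of satisfaction for existential positive formulas, so the whole argument reduces to the elementary fact, already recorded in the Polynomials paragraph, that a value of the form $n^{d}$ determines $d$.
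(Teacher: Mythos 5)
Your proof is correct and is essentially the paper's own argument: the paper evaluates both formulas on the complete structure over a two-element universe, obtaining counts $2^{|\lib(\phi)|}$ and $2^{|\lib(\phi')|}$, which forces the exponents to agree. Your version with a universe of size $n$ is the same idea (a single instance $n=2$ already suffices), and the preliminary detours through $\reli_\tau$ and products are unnecessary but harmless.
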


\begin{proof}
 Let $\relc$ be a structure that interprets every relation symbol in $R$ of $\phi$ by $R^\relc := \{0,1\}^{\mathsf{arity}(R)}$. Then $|\phi(\relc)| = 2^{|\lib(\phi)|}$ and $|\phi'(\relc)|= 2^{|\lib(\phi')|}$ and the claim follows directly.
\end{proof}

\begin{proof} (Theorem~\ref{thm:renamingandcounting})
We begin with the backward direction;
let $h_1: S_1 \to S_2$ and $h_2: S_2 \to S_1$ be the surjections
from the definition of renaming equivalence.
The existence of these surjections implies that $|S_1| = |S_2|$
and that each of $h_1$, $h_2$ is a bijection.
Let $\relb$ be an arbitrary structure.
For each $f: S_2 \to B$ in  $\phi_2(\relb)$,
it is straightforward to verify that the
composition $f(h_1)$ is in $\phi_1(\relb)$.
Since the mapping that takes each such $f$ to $f(h_1)$
is injective (due to $h_1$ being a bijection), 
we obtain that $|\phi_1(\relb)| \geq |\phi_2(\relb)|$.
By symmetric reasoning, we can obtain that
$|\phi_1(\relb)| \leq |\phi_2(\relb)|$,
and we conclude that
$|\phi_1(\relb)| = |\phi_2(\relb)|$.

 For the other direction, let $\phi_1(S_1)$ and $\phi_2(S_2)$ be two pp-formulas over a common vocabulary $\tau$ 
 that are not renaming equivalent; 
 let $(\rela_1, S_1)$ and $(\rela_2, S_2)$ 
 be the corresponding structures. 
 By way of contradiction, assume that $\phi_1$ and $\phi_2$ are counting equivalent. 
 If it holds that
 $|\lib(\phi_1)| \ne |\lib(\phi_2)|$, we are done by 
 Observation~\ref{obs:samevariables}. So we may assume, after potentially renaming some variables, that $\lib(\phi_1)=\lib(\phi_2)=:S$.
 
When $\relc$, $\reld$ are structures 
with $S \subseteq C \cap D$,
let us define $\hom(\relc, \reld, S)$ to be the set of mappings
from $S$ to $D$ that can be extended to a homomorphism 
from $\relc$ to $\reld$;
 denote by $\surj(\relc, \reld, S)$ 
 the surjections $h :S\rightarrow S$ that lie in $\hom(\relc, \reld, S)$.

 As $(\rela_1, S_1)$ and $(\rela_2, S_2)$ are by hypothesis
 not renaming equivalent, 
 we may assume, without loss of generality, that $\surj(\rela_1, \rela_2, S)= \emptyset$. For $T\subseteq S$ let us use
 $\hom_T(\rela_1, \rela_2, S)$ to denote the set of mappings $h\in \hom(\rela_1, \rela_2, S)$ such that $h(S)\subseteq  T$. By inclusion-exclusion we get 
 \[|\surj({\phi_1},{\phi_2},S)| = \sum_{T\subseteq S} (-1)^{|S|-|T|} |\hom_T(\rela_1, \rela_2,S)|.\]
 
For $i\ge 0$ let $\hom_{i,T}(\rela_1, \rela_2,S)$ be the set of mappings $h\in \hom(\rela_1, \rela_2,S)$ such that $h$ maps exactly $i$ variables from $S$ into $T$. Now for each $j=1,\ldots , |S|$ we construct a new structure $\reld_{j,T}$ over the domain $D_{j,T}$. To this end, let $a^{(1)}, \ldots, a^{(j)}$ be copies of $a\in T$ that are not in $A_2$. Then we set \begin{align*}D_{j,T}:= \{a^{(k)}\mid a\in A_2, a\in T, k\in [j]\} \cup (A_2\setminus T).\end{align*} We define a mapping $B:A_2\rightarrow \calP(D_{j,T})$, where $\calP(D_{j,T})$ is the power set of $D_{j,T}$, by
\[B(a):= \begin{cases}
            \{a^{(k)}\mid k\in [j]\}\}, & \text{if } a\in T\\
            \{a\}, & \text{otherwise}.
           \end{cases}\]
For every relation symbol $R\in \tau$ we define
\[ R^{\reld_{T,j}} := \bigcup_{(d_1, \ldots, d_s)\in R^{\rela_2}} B(d_1)\times \ldots \times B(d_s).\]

Then every $h\in \hom_{i,T}(\rela_1, \rela_2,S)$ corresponds to $j^i$ mappings in $\hom(\rela_1, \reld_{j,T},S)$. Thus for each $j$ we get \[\sum_{i=1}^{|S|} j^i |\hom_{i,T}(\rela_1, \rela_2,S)| = |\hom(\rela_1,\reld_{j,T}, S)|.\] This is a linear system of equations and the corresponding matrix is a Vandermonde matrix; consequently, the value $\hom_{T}(\rela_1, \rela_2,S) = \hom_{|S|,T}(\rela_1, \rela_2,S)$ can \arxivversion{efficiently} be computed from $|\hom(\rela_1, \reld,S)|=|\phi_1(\reld)|$ for some structures $\reld$. 
We can similarly determine 
$|\hom_T(\rela_2, \reld, S)|$
as a function of $|\phi_2(\reld)|$ for the same structures $\reld$.
Since $|\phi_1(\reld)| = |\phi_2(\reld)|$ for every structure $\reld$ by assumption, it follows that for every subset $T\subseteq S$ we have 
\[|\hom_{T}(\rela_1, \rela_2,S)| = |\hom_{T}(\rela_2, \rela_2,S)|.\] 
But then we have 
\[|\surj(\rela_1, \rela_2, S)| = |\surj(\rela_2, \rela_2, S)|.\] 
Since 
$\surj(\rela_1, \rela_2, S)= \emptyset$ and $\id\in \surj(\rela_2, \rela_2, S)$, this is a contradiction. Consequently, we obtain that $\phi_1$ and $\phi_2$ are not counting equivalent.
\end{proof}

\subsection{Semi-counting equivalence}
\label{subsect:semi-counting-equivalence}

In this subsection, we study a
relaxation of the notion of \emph{counting equivalence}. This notion will be necessary when we emulate the approach of Example~\ref{ex:eqfirstrevisited} in the proof of the Equivalence theorem: we will again construct a system of linear equations that we want to solve. In order to ensure solvability, we will make sure that the matrix of the system is again a Vandermonde matrix which in particular means that all its entries must be positive. Consequently, since the entries are of the form $|\phi(\relc)|^k$ for pp-formulas $\phi$ some carefully chosen structure $\relc$ and integers $k$, it will be necessary to understand counting equivalence in the case where $\phi(\relc)$ is non-empty. The necessary notion is formalized by the following definition.

\begin{definition}
Call two prenex pp-formulas 
$$\phi_1(V_1), \phi_2(V_2)$$
on the same vocabulary
\emph{semi-counting equivalent}
if for each finite structure $\relb$
such that $|\phi_1(\relb)| > 0$ and $|\phi_2(\relb)| > 0$,
it holds that $|\phi_1(\relb)| = |\phi_2(\relb)|$.
\end{definition}

\begin{example}
 The pp-formulas $\phi_1(x,y)= E(x,y)$ and 
 $\phi_2(x,y)=\exists z (E(x,y)\land F(z))$ are not counting equivalent, because for every structure $\relb$ for which $F^\relb=\emptyset$, we have $|\phi_2(\relb)| = 0$ while $|\phi_1(\relb)|$ may be non-zero if $E^{\relb}$ is non-empty. But if we have for a structure $\relb$ such that $|\phi_2(\relb)|>0$, then $F^\relb\ne \emptyset$ and it is straightforward to verify that $|\phi_1(\relb)| = |\phi_2(\relb)|$. Consequently,
 we have that $\phi_1$ and $\phi_2$ are semi-counting equivalent.
\end{example}

For each free prenex pp-formula $\phi(V)$, define $\widehat{\phi}(V)$ 
to be the pp-formula
obtained from $\phi$ 
by removing each atom
that occurs in a non-liberal component of $\phi$
(a component of $\phi$ not having liberal variables).

\begin{example}
Consider the pp-formula $\phi$ discussed
 in Examples~\ref{ex:pp-formula}
and~\ref{ex:pp-formula-components}.
This pp-formula has $4$ components, namely, the pp-formulas
$\psi_1(x, x')$,
$\psi_2(y)$,
$\psi_3(z)$,
and
$\psi_4(\emptyset)$
defined in Example~\ref{ex:pp-formula-components}.
The formulas $\psi_1$, $\psi_2$, and $\psi_3$ are liberal,
but the formula $\psi_4$ is not liberal.
Recall that the formula
$\phi(x,x',y,z)$ is equal to 
$$\exists y' \exists u \exists v \exists w (E(x, x') \wedge E(y,y') \wedge F(u,v) \wedge G(u,w))$$
and that we have
$\psi_4(\emptyset) = \exists u \exists v \exists w (F(u,v) \wedge G(u,w))$.
We hence have that $\widehat{\phi}(x,x',y,z)$ is the formula
$$\exists y' \exists u \exists v \exists w 
(E(x, x') \wedge E(y,y')).$$
\end{example}

The following characterization of semi-counting equivalence
 is the main theorem of this subsection.

\begin{theorem}\label{thm:semiequichar}
Let $\phi_1(V_1), \phi_2(V_2)$ be two free prenex pp-formulas.
It holds that $\phi_1(V_1)$ and $\phi_2(V_2)$
are semi-counting equivalent
if and only if $\widehat{\phi_1}(V_1)$ and $\widehat{\phi_2}(V_2)$ are counting equivalent.
\end{theorem}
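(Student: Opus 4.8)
The plan is to prove Theorem~\ref{thm:semiequichar} by reducing semi-counting equivalence to counting equivalence of the stripped-down formulas $\widehat{\phi_1}$ and $\widehat{\phi_2}$. The essential observation is that, for any free prenex pp-formula $\phi(V)$ and any structure $\relb$, the count $|\phi(\relb)|$ factors over the components of $\phi$, and the non-liberal components contribute only a \emph{scalar} factor that is independent of the liberal variables. Writing $\phi$ as the conjunction of its liberal part $\widehat{\phi}$ and its non-liberal components $\chi_1, \ldots, \chi_m$ (each a sentence in the logical sense, with empty liberal variable set), we have $|\phi(\relb)| = |\widehat{\phi}(\relb)| \cdot \prod_{i=1}^m |\chi_i(\relb)|$, where each factor $|\chi_i(\relb)|$ is either $0$ or a positive integer. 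In particular, $|\phi(\relb)| > 0$ if and only if $|\widehat{\phi}(\relb)| > 0$ \emph{and} every non-liberal component is satisfiable in $\relb$; and whenever $|\phi(\relb)| > 0$, the non-liberal factors are each at least $1$.

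For the backward direction, suppose $\widehat{\phi_1}$ and $\widehat{\phi_2}$ are counting equivalent, so $|\widehat{\phi_1}(\relb)| = |\widehat{\phi_2}(\relb)|$ for every $\relb$. I would take an arbitrary $\relb$ with $|\phi_1(\relb)| > 0$ and $|\phi_2(\relb)| > 0$ and show the counts agree. The difficulty here is that the non-liberal factors need not be $1$: satisfiable sentence-components can have many satisfying assignments over the quantified variables. The clean way around this is to observe that semi-counting equivalence need only be checked on structures where it can possibly fail, and to exploit a product-structure argument analogous to Example~\ref{ex:eqfirstrevisited}. Concretely, replacing $\relb$ by $\relb \times \reli_\tau$ (the structure with a single element on which every relation is full) leaves the liberal counts multiplied by a known factor while forcing each non-liberal component to contribute exactly $1$; more carefully, one argues that on the right structures the scalar contributions of the non-liberal parts of $\phi_1$ and $\phi_2$ can be normalized away, leaving $|\widehat{\phi_1}(\relb)| = |\widehat{\phi_2}(\relb)|$ to carry the equality. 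I expect the main obstacle to be precisely this normalization of the non-liberal scalar factors: one must ensure that whenever both $\phi_1(\relb)$ and $\phi_2(\relb)$ are nonempty, the extra multiplicative constants do not break the equality. The resolution is that over a structure rich enough to satisfy every non-liberal component but on which we track liberal assignments separately, the only source of variation in the count is the liberal part, so counting equivalence of $\widehat{\phi_1}, \widehat{\phi_2}$ suffices.

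For the forward direction, suppose $\phi_1, \phi_2$ are semi-counting equivalent; I would show $\widehat{\phi_1}, \widehat{\phi_2}$ are counting equivalent, i.e.\ $|\widehat{\phi_1}(\relc)| = |\widehat{\phi_2}(\relc)|$ for \emph{every} $\relc$. The idea is to take an arbitrary $\relc$ and build from it a structure $\relc'$ on which both $\phi_1$ and $\phi_2$ become nonempty while the liberal counts are preserved. This is achieved by forming a disjoint union or product of $\relc$ with a fixed ``saturated'' structure $\reli_\tau$ that satisfies every non-liberal component of both $\phi_1$ and $\phi_2$, chosen so that the liberal components behave the same as on $\relc$. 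On $\relc'$ one has $|\phi_1(\relc')| > 0$ and $|\phi_2(\relc')| > 0$, so semi-counting equivalence forces $|\phi_1(\relc')| = |\phi_2(\relc')|$; dividing out the (equal, or separately controlled) non-liberal scalar factors then yields $|\widehat{\phi_1}(\relc)| = |\widehat{\phi_2}(\relc)|$. The technical care lies in choosing the auxiliary structure so that it does not alter the liberal-part counts, and in handling the case where $|\widehat{\phi_1}(\relc)| = 0$; for the latter one checks that nonemptiness of the liberal part is itself a homomorphism-existence condition that is preserved under the construction, so both sides are simultaneously zero or simultaneously positive. Throughout I would rely on the multiplicativity of counts over components and on the characterization of logical entailment via augmented structures (Theorem~\ref{thm:ChandraMerlin}) to verify that the auxiliary structures have the required satisfiability properties.
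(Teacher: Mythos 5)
There is a genuine gap, and it is concentrated in the forward direction; there is also a conceptual misreading that distorts your backward direction. First, the misreading: under the paper's conventions, counts are taken over the \emph{liberal} variables only, and a non-liberal component $\chi_i$ has $\lib(\chi_i)=\emptyset$, so $|\chi_i(\relb)|\in\{0,1\}$ --- it is the number of satisfying assignments of the empty tuple of variables, not the number of witnesses for the quantified variables. Hence the ``scalar factors'' you worry about normalizing never exceed $1$, and the backward direction needs no product-structure argument at all: either some non-liberal component fails on $\relb$ and $\phi(\relb)=\emptyset$, or they all hold and $\phi(\relb)=\widehat{\phi}(\relb)$ as sets of assignments (this is the paper's Proposition~\ref{prop:samesolutions}); counting equivalence of the hats then finishes immediately. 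Note also that you cannot in any case ``replace $\relb$ by $\relb\times\reli_\tau$'' in that direction, since you must establish equality of counts on the given $\relb$ itself.

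The forward direction is where your plan actually breaks. You propose to pass from an arbitrary $\relc$ to an auxiliary $\relc'$ (disjoint union or product with a saturated structure) ``chosen so that the liberal components behave the same as on $\relc$'', and then divide out the extra factors. No such single structure exists: the product $\relc\times\reli_\tau$ is isomorphic to $\relc$ and does not make the non-liberal sentences true, while the disjoint union $\relc+k\reli$ does make them true but \emph{changes the liberal counts in a formula-dependent way}. Since each liberal component $\phi_{1,j}$ is connected, $|\phi_{1,j}(\relc+k\reli)|=|\phi_{1,j}(\relc)|+k$, so $|\widehat{\phi_1}(\relc+k\reli)|=\prod_j\bigl(|\phi_{1,j}(\relc)|+k\bigr)$, and this distortion depends on the number and sizes of the components of each formula, so it cannot be ``divided out'' uniformly for $\phi_1$ and $\phi_2$. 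Equality at a single $k$ is genuinely insufficient: if $\widehat{\phi_1}$ has one component with count $3$ and $\widehat{\phi_2}$ has two components each with count $1$, then at $k=1$ both products equal $4$ while the original counts are $3$ and $1$. The missing idea is to let $k$ range over all positive integers: semi-counting equivalence forces $|\widehat{\phi_1}(\relc+k\reli)|=|\widehat{\phi_2}(\relc+k\reli)|$ for every $k>0$, both sides are polynomials in $k$, so all their coefficients coincide, and the constant coefficients are exactly $\prod_j|\phi_{1,j}(\relc)|=|\widehat{\phi_1}(\relc)|$ and $|\widehat{\phi_2}(\relc)|$. Without this polynomial-coefficient extraction (or an equivalent interpolation argument), the step from ``equal on the modified structure'' to ``equal on $\relc$'' does not go through.
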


We will use the following proposition in the proof of Theorem~\ref{thm:semiequichar}.

\begin{proposition}\label{prop:samesolutions}
 Let $\phi(V)$ be a free prenex pp-for\-mula. Then for every structure $\relb$ we have $\phi(\relb)=\emptyset$ or $\phi(\relb)=\widehat{\phi}(\relb)$.
\end{proposition}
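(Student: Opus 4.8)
The plan is to prove Proposition~\ref{prop:samesolutions} by a structural analysis based on the component decomposition of the free prenex pp-formula $\phi(V)$. Recall that $\widehat{\phi}(V)$ is obtained from $\phi$ by deleting exactly the atoms lying in non-liberal components; a non-liberal component is a component $(\rela', S')$ of $\phi$ with $S' = \emptyset$, i.e. one whose vertices consist entirely of quantified variables. The key observation is that such a component $\psi'(\emptyset)$ is a \emph{sentence}, so for any structure $\relb$ we have $|\psi'(\relb)| \in \{0, 1\}$: either the empty assignment satisfies it (when $\relb \models \psi'$) or not.

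First I would list the components of $\phi$ as $\phi_1(V_1), \ldots, \phi_c(V_c)$, and use the fact recorded in the Graphs subsection that $|\phi(\relb)| = \prod_{i=1}^c |\phi_i(\relb)|$; the same product formula applies to $\widehat{\phi}$, whose components are precisely the liberal components of $\phi$. I would split into two cases according to whether $\phi(\relb)$ is empty. If $\phi(\relb) = \emptyset$, the stated disjunction holds trivially. So suppose $\phi(\relb) \neq \emptyset$; then the product $\prod_i |\phi_i(\relb)|$ is nonzero, which forces $|\phi_i(\relb)| \geq 1$ for every component, and in particular $|\psi'(\relb)| = 1$ for each non-liberal (sentence) component $\psi'$.

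The heart of the argument is then to show that, under the hypothesis $\phi(\relb) \neq \emptyset$, the sets $\phi(\relb)$ and $\widehat{\phi}(\relb)$ coincide as sets of assignments $f : V \to B$. Since $\lib(\phi) = \lib(\widehat{\phi}) = V$, both are sets of maps on the same domain $V$, and a map $f : V \to B$ satisfies $\phi$ iff it satisfies each component and the non-liberal components impose no constraint on the values of $f$ on liberal variables. More precisely, $\relb, f \models \phi$ iff $\relb, f \models \phi_i$ for every $i$; for a non-liberal component this condition is just ``$\relb \models \psi'$'', independent of $f$, which holds by the case assumption. Hence $\relb, f \models \phi$ iff $\relb, f \models \widehat{\phi}$, giving $\phi(\relb) = \widehat{\phi}(\relb)$ as desired.

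The main subtlety to handle carefully is the interaction between liberal variables and the component/product decomposition: one must be sure that each liberal variable belongs to exactly one liberal component, so that an assignment on $V$ restricts coherently to the liberal components of $\widehat{\phi}$, and that the removed atoms genuinely do not mention any liberal variable (which is exactly what ``non-liberal component'' guarantees). Once this bookkeeping is in place, the equality $\phi(\relb) = \widehat{\phi}(\relb)$ in the nonempty case follows directly from the definition of $\models$ for prenex pp-formulas, and no further computation is needed.
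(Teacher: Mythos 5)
Your proof is correct and takes essentially the same approach as the paper: the paper likewise lets $\psi$ be the conjunction of the deleted non-liberal components and observes that either $\psi$ fails on $\relb$ (forcing $\phi(\relb)=\emptyset$) or $\psi$ holds and imposes no constraint on assignments, so $\relb,f\models\phi$ iff $\relb,f\models\widehat{\phi}$. Your extra bookkeeping via the component product formula is a harmless elaboration of the same idea.
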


\begin{proof}
 Let $\relb$ be a structure. 
Let $\psi$ be the conjunction of the components deleted from $\phi$
to obtain $\widehat{\phi}$.
If $\psi$ is false on $\relb$, then obviously $\phi(\relb) = \emptyset$.
Otherwise, $\psi$ is true on $\relb$, and for any assignment $f: V \to B$,
it holds that $\relb, f \models \phi$ if and only if $\relb, f \models \widehat{\phi}$.
\end{proof}

\begin{proof}  (Theorem~\ref{thm:semiequichar})
 Assume first that $\widehat{\phi_1}$ and $\widehat{\phi_2}$ are counting equivalent. Let $\relb$ be a structure. Then if $|\phi_1(\relb)|> 0$ and $|\phi_2(\relb)|> 0$, we have by Proposition~\ref{prop:samesolutions} and counting equivalence of $\widehat{\phi_1}$ and $\widehat{\phi_2}$ that 
$|\phi_1(\relb)| = |\widehat{\phi_1}(\relb)| = |\widehat{\phi_2}(\relb)| = |\phi_2(\relb)|$,
 so $\phi_1$ and $\phi_2$ are semi-counting equivalent.
 
 For the other direction let now $\phi_1$ and $\phi_2$ be semi-counting equivalent. By way of contradiction, we assume that $\widehat{\phi_1}$ and $\widehat{\phi_2}$ are not counting equivalent. Then by definition there is a structure $\relb$ such that $|\widehat{\phi_1}(\relb)|\ne |\widehat{\phi_2}(\relb)|$. 
 Note that each component of $\widehat{\phi_1}$ and $\widehat{\phi_2}$
  has a liberal variable.
 
Let $\reli = \reli_{\tau}$,
where $\tau$ is the vocabulary of $\phi_1$ and $\phi_2$.
For each $k\in \mathbb{N}$ we denote by $\relb+k\reli$ the structure we get from $\relb$ by disjoint union with $k$ copies of $\reli$. Note that for $k>0$ we have $|\phi(\relb +k\reli)| > 0$ for every pp-formula $\phi$. Consequently, for every $k>0$ we have $|\phi_1(\relb+k\reli)| = |\widehat{\phi_1}(\relb+k\reli)|$ and $|\phi_2(\relb+k\reli)| = |\widehat{\phi_2}(\relb+k\reli)|$ by Proposition~\ref{prop:samesolutions}. By the semi-counting equivalence of $\phi_1$ and $\phi_2$ we also have $|\phi_1(\relb +k\reli)| =  |\phi_2(\relb +k\reli)|$ for all $k>0$. It follows that $|\widehat{\phi_1}(\relb +k\reli)| =  |\widehat{\phi_2}(\relb +k\reli)|$ for $k>0$.

Let $\phi_{1,1}, \ldots, \phi_{1,n}$ denote 
the components of $\widehat{\phi_1}$, and 
let $\phi_{2,1}, \ldots , \phi_{2,m}$ denote 
the  components of $\widehat{\phi_2}$. 
Because every component of $\widehat{\phi_1}$ has a liberal variable, 
we have 
\begin{align*}
|\widehat{\phi_1}(\relb+k\reli)| &= \sum_{J\subseteq [n]} k^{n-|J|} \prod_{j\in J}|\phi_{1,j}(\relb)| \\
 & =\sum_{\ell=0}^n k^{n-\ell} \sum_{J\subseteq [n], |J|=\ell}  \prod_{j\in J}|\phi_{1,j}(\relb)|.
 \end{align*}
We can express
$|\widehat{\phi_2}(\relb+k\reli)|$ analogously.
The expressions are polynomials in $k$ and they are equal for every positive integer $k$ by the observations above;
thus the coefficients of the polynomials must coincide.
The coefficients of $k^0$, namely 
the values
$\prod_{j\in [n]}|\phi_{1,j}(\relb)|$ and
$\prod_{j\in [m]}|\phi_{2,j}(\relb)|$, are thus equal. But then we get 
\[|\widehat{\phi_1}(\relb)|= \prod_{j\in [n]}|\phi_{1,j}(\relb)|= \prod_{j\in [m]}|\phi_{2,j}(\relb)|=|\widehat{\phi_2}(\relb)|,\] which is a contradiction to our assumption.
\end{proof}

\begin{corollary}
 Semi-counting equivalence 
 is an equivalence relation
  (on pp-formulas).
\end{corollary}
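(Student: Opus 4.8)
The plan is to reduce the statement to the observation that \emph{counting equivalence} is manifestly an equivalence relation, and then transport this through the characterization of Theorem~\ref{thm:semiequichar}. Reflexivity and symmetry of semi-counting equivalence are immediate directly from the definition: for any finite structure $\relb$ we trivially have $|\phi(\relb)| = |\phi(\relb)|$, and the defining condition ``$|\phi_1(\relb)| > 0$ and $|\phi_2(\relb)| > 0$ imply $|\phi_1(\relb)| = |\phi_2(\relb)|$'' is symmetric in the two formulas. Hence the entire content of the corollary is \emph{transitivity}, which I expect to be the main obstacle.

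The reason transitivity is not immediate is the following. Suppose $\phi_1, \phi_2$ are semi-counting equivalent and $\phi_2, \phi_3$ are semi-counting equivalent, and let $\relb$ be a finite structure with $|\phi_1(\relb)| > 0$ and $|\phi_3(\relb)| > 0$; we want $|\phi_1(\relb)| = |\phi_3(\relb)|$. One would like to chain the two equalities through $\phi_2$, but this fails precisely when $|\phi_2(\relb)| = 0$: the hypotheses impose no constraint on structures where the intermediate formula vanishes, so the naive deduction $|\phi_1(\relb)| = |\phi_2(\relb)| = |\phi_3(\relb)|$ is unavailable. This positivity restriction is exactly what distinguishes semi-counting equivalence from counting equivalence and what makes transitivity nontrivial.

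To circumvent this, I would pass to the hatted formulas. By Theorem~\ref{thm:semiequichar}, for free prenex pp-formulas $\phi, \psi$, the formulas $\phi$ and $\psi$ are semi-counting equivalent if and only if $\widehat{\phi}$ and $\widehat{\psi}$ are counting equivalent. Counting equivalence is trivially an equivalence relation, since it is defined by the equality $|\alpha(\relb)| = |\beta(\relb)|$ holding for \emph{every} finite structure $\relb$, a condition carrying no positivity restriction and therefore closed under reflexivity, symmetry, and transitivity. Consequently, semi-counting equivalence---being, on the class of free prenex pp-formulas, the pullback of counting equivalence along the map $\phi \mapsto \widehat{\phi}$---inherits all three properties. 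Concretely, from $\phi_1 \sim \phi_2$ and $\phi_2 \sim \phi_3$ one obtains that $\widehat{\phi_1}$ and $\widehat{\phi_2}$ are counting equivalent and that $\widehat{\phi_2}$ and $\widehat{\phi_3}$ are counting equivalent; transitivity of counting equivalence gives that $\widehat{\phi_1}$ and $\widehat{\phi_3}$ are counting equivalent, and a final application of Theorem~\ref{thm:semiequichar} yields that $\phi_1$ and $\phi_3$ are semi-counting equivalent, as desired. The key point is that the characterization dissolves the obstacle by eliminating the positivity side condition, so that the intermediate formula $\widehat{\phi_2}$ can no longer vanish away the chain of equalities.
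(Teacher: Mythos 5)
Your proof is correct and is exactly the argument the paper intends: the corollary is stated as an immediate consequence of Theorem~\ref{thm:semiequichar}, obtained by pulling back the (manifestly transitive) relation of counting equivalence along $\phi \mapsto \widehat{\phi}$. Your identification of transitivity as the only nontrivial point --- and of the possible vanishing of the intermediate formula as the obstacle that the hatted characterization removes --- matches the paper's (implicit) reasoning.
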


We now present a lemma that will be of utility; it is proved by induction.


\begin{lemma}\label{lem:semigeneral}
 Let $\phi_1(S_1), \ldots, \phi_n(S_n)$ be pp-formulas over the same vocabulary $\tau$, 
 which are liberal (that is, with each $|S_i| > 0$). Then there is a structure $\relc$ (over $\tau$) such that 
\begin{itemize}
\item for all pp-formulas $\phi$ (over $\tau$) it holds
that $|\phi(\relc)|>0$, and 

\item
for all $i,j\in [n]$ such that $\phi_i$ and $\phi_j$ are not semi-counting equivalent, it holds that $|\phi_i(\relc)|\ne |\phi_j(\relc)|$.
\end{itemize}
\end{lemma}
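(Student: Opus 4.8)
The plan is to prove the lemma in two stages: first produce, for each pair that must be separated, a single well-behaved witness structure, and then combine all the witnesses into one structure by an inductive product construction. The key device throughout is a \emph{reflexive point}: call $a \in C$ reflexive if $(a,\ldots,a) \in R^{\relc}$ for every $R \in \tau$. If $\relc$ has a reflexive point, then $|\phi(\relc)| \ge 1$ for every pp-formula $\phi$, since mapping all (liberal and quantified) variables to $a$ satisfies every atom; this delivers the first bullet. A reflexive point is preserved under products, as $(a_1,a_2)$ is reflexive in $\relc_1 \times \relc_2$ whenever $a_1,a_2$ are reflexive in $\relc_1,\relc_2$. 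I will also use the fact recalled in Example~\ref{ex:eqfirstrevisited} that $|\phi(\reld_1 \times \reld_2)| = |\phi(\reld_1)|\cdot|\phi(\reld_2)|$ for pp-formulas, together with its consequence $|\phi(\relb^e)| = |\phi(\relb)|^e$, where $\relb^0 = \reli_\tau$ and $|\phi(\reli_\tau)| = 1$.

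For the witness step, fix a pair $\phi_i,\phi_j$ that is not semi-counting equivalent. By definition there is a structure $\relb_0$ with $|\phi_i(\relb_0)|,|\phi_j(\relb_0)| > 0$ and $|\phi_i(\relb_0)| \ne |\phi_j(\relb_0)|$, but $\relb_0$ need not have a reflexive point. To fix this I would pass to the structures $\relb_0 + k\reli$ and argue as in the polynomial computation in the proof of Theorem~\ref{thm:semiequichar}. Decomposing $\phi_i$ into connected components, each liberal component contributes a factor $|\psi(\relb_0)| + k$ to $|\phi_i(\relb_0 + k\reli)|$, while each non-liberal component contributes $1$ once $k \ge 1$; positivity of $\phi_i$ on $\relb_0$ forces the non-liberal factors to equal $1$ already at $k = 0$. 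Hence $k \mapsto |\phi_i(\relb_0 + k\reli)|$ agrees with a polynomial $P_i$ satisfying $P_i(0) = |\phi_i(\relb_0)|$, and similarly for $j$. Since $P_i(0) \ne P_j(0)$ the polynomials differ, so $P_i(k^*) \ne P_j(k^*)$ for some $k^* \ge 1$; the structure $\relb_{ij} := \relb_0 + k^*\reli$ then has a reflexive point and separates $\phi_i$ and $\phi_j$.

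The combination step is where the induction lives and is the main obstacle. The naive idea of taking the product of all witnesses fails, because the product multiplies counts and a genuine discrepancy in one factor may be cancelled by coincidental equalities in the others. Instead I would enumerate the pairs to be separated as $(a_1,b_1),\ldots,(a_m,b_m)$ with reflexive-point witnesses $\relb_1,\ldots,\relb_m$, and build $\relc_r$ by induction on $r$, maintaining that $\relc_r$ has a reflexive point and that $|\phi_{a_t}(\relc_r)| \ne |\phi_{b_t}(\relc_r)|$ for all $t \le r$; take $\relc_0 = \reli_\tau$. Given $\relc_r$, I set $\relc_{r+1} = \relc_r \times \relb_{r+1}^{\,e}$ and choose the exponent $e$ carefully. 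For each pair $t \le r+1$ the two competing counts have the form $A x^e$ and $B y^e$ with positive reals $A,B,x,y$. For $t \le r$, either $x = y$, in which case $A x^e \ne B y^e$ for all $e$ since $A \ne B$ holds at $e = 0$ by the induction hypothesis, or $x \ne y$, in which case $A x^e = B y^e$ forces $(x/y)^e = B/A$ and hence has at most one solution $e$ as $x/y \ne 1$. For the new pair one has $x = |\phi_{a_{r+1}}(\relb_{r+1})| \ne |\phi_{b_{r+1}}(\relb_{r+1})| = y$ by the witness property, so again at most one value of $e$ is bad.

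Consequently at most $r+1$ exponents are forbidden, and choosing $e \in \{0,1,\ldots,r+1\}$ outside this set yields $\relc_{r+1}$ with all the required properties; a reflexive point survives since $\relb_{r+1}^{\,0} = \reli_\tau$ has one and reflexive points pass to products. The structure $\relc := \relc_m$ then has a reflexive point, giving $|\phi(\relc)| > 0$ for every pp-formula $\phi$, and separates every non-semi-counting-equivalent pair among the $\phi_i$, which is exactly what the lemma asserts.
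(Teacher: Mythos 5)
Your proof is correct, and it shares the paper's overall strategy: first manufacture, for each pair that must be separated, a witness structure on which every pp-formula has a positive count (your reflexive-point construction via $\relb_0 + k^*\reli$ and the polynomial-identity argument is exactly the paper's Lemma~\ref{lem:twocasesemi}), and then combine the witnesses by taking products with powers of structures, exploiting multiplicativity of counts under products. The difference is in the combination step. The paper inducts on the number of formulas $n$: it takes the structure $\reld$ obtained inductively for $\phi_1,\ldots,\phi_{n-1}$, sorts the counts, and raises $\reld$ to a power $\ell$ large enough that the bounded multiplicative perturbation introduced by the new pairwise witness $\reld'$ cannot reorder the sorted counts, so only the single collision involving $\phi_n$ needs to be (and is) broken by the $\reld'$ factor. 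You instead induct on the pairs to be separated, keep the accumulated structure $\relc_r$ fixed, and raise the \emph{new} witness to a power $e$, observing that each pair forbids at most one exponent because $e \mapsto (x/y)^e$ is injective when $x \neq y$, so a pigeonhole over $e \in \{0,1,\ldots,r+1\}$ suffices. Your version trades the paper's explicit magnitude estimates for a finite count of bad exponents, which is arguably cleaner and more in the spirit of the Vandermonde-style arguments used elsewhere in the paper; the cost is tracking one product factor per separated pair (up to $\binom{n}{2}$ of them) rather than one per formula, which is immaterial here since $\relc$ only needs to exist and be computable from the $\phi_i$.
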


In order to establish this lemma, we first prove the following lemma.

\begin{lemma}\label{lem:twocasesemi}
 Let $\phi_1(S_1)$ and $\phi_2(S_2)$ be two pp-for\-mu\-las 
 over a vocabulary $\tau$ that are not semi-counting equivalent. Then there is a structure $\reld$ such that for every primitive positive formula $\phi$ over $\tau$ we have $|\phi(\reld)|> 0$ and $|\phi_1(\reld)|\ne |\phi_2(\reld)|$.
\end{lemma}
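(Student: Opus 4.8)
The plan is to prove Lemma~\ref{lem:twocasesemi} by producing a single structure $\reld$ that simultaneously makes every pp-formula over $\tau$ nonempty and separates the counts of $\phi_1$ and $\phi_2$. Since $\phi_1(S_1)$ and $\phi_2(S_2)$ are not semi-counting equivalent, by definition there exists a finite structure $\relb$ with $|\phi_1(\relb)| > 0$, $|\phi_2(\relb)| > 0$, and $|\phi_1(\relb)| \neq |\phi_2(\relb)|$. This $\relb$ already witnesses the separation, but it need not make \emph{every} pp-formula nonempty. The main idea is therefore to repair $\relb$ by adjoining copies of the all-constant structure $\reli = \reli_{\tau}$, exactly as in the proof of Theorem~\ref{thm:semiequichar}: I would consider the structures $\relb + k\reli$ (disjoint union of $\relb$ with $k$ copies of $\reli$), which for every $k > 0$ satisfy $|\phi(\relb + k\reli)| > 0$ for every pp-formula $\phi$, since $\reli$ alone already satisfies every pp-formula.

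The delicate point, and the step I expect to be the main obstacle, is ensuring that the separation $|\phi_1| \neq |\phi_2|$ \emph{survives} the addition of the $\reli$ copies for some well-chosen $k$. Adding $\reli$'s changes both counts, so I cannot assume the inequality persists for a fixed $k$ automatically; instead I would argue via the polynomial structure already exploited in Theorem~\ref{thm:semiequichar}. First I would reduce $\phi_1, \phi_2$ to their $\widehat{\phi_1}, \widehat{\phi_2}$ using Proposition~\ref{prop:samesolutions}: since both formulas have positive count on $\relb$, we have $|\phi_i(\relb)| = |\widehat{\phi_i}(\relb)|$, so $\widehat{\phi_1}$ and $\widehat{\phi_2}$ are not counting equivalent on $\relb$ either. (If some $\phi_i$ has no liberal variable, a small separate argument or the convention on liberal formulas handles it; I would assume both are free and handle $\widehat{\phi_i}$ componentwise as in Theorem~\ref{thm:semiequichar}.) Then $|\widehat{\phi_1}(\relb + k\reli)|$ and $|\widehat{\phi_2}(\relb + k\reli)|$ are polynomials in $k$, and these polynomials are \emph{distinct}, because their value at $k = 0$ is $|\widehat{\phi_1}(\relb)| \neq |\widehat{\phi_2}(\relb)|$ (here the $k = 0$ instance must be read as the constant term of the polynomial expansion, not as an actual structure).

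Two distinct polynomials agree on only finitely many points, so there are only finitely many $k$ for which $|\widehat{\phi_1}(\relb + k\reli)| = |\widehat{\phi_2}(\relb + k\reli)|$. I would therefore choose any $k > 0$ avoiding these finitely many bad values and set $\reld := \relb + k\reli$. For this choice, every pp-formula has positive count on $\reld$ (because $k > 0$), and by Proposition~\ref{prop:samesolutions} we recover $|\phi_i(\reld)| = |\widehat{\phi_i}(\reld)|$, whence $|\phi_1(\reld)| = |\widehat{\phi_1}(\reld)| \neq |\widehat{\phi_2}(\reld)| = |\phi_2(\reld)|$, as required. The only genuinely technical work is verifying that the two polynomials are not identically equal, which follows cleanly from their differing constant terms; everything else is assembling the pieces already established.

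Once Lemma~\ref{lem:twocasesemi} is in hand, Lemma~\ref{lem:semigeneral} follows by a standard amalgamation argument over the finitely many pairs $(i,j)$ with $\phi_i, \phi_j$ not semi-counting equivalent. For each such pair I obtain a separating structure $\reld_{i,j}$ from Lemma~\ref{lem:twocasesemi}, and I would take $\relc$ to be a suitable product $\relc := \prod_{i,j} \reld_{i,j}$, using the multiplicativity $|\phi(\reld_1 \times \reld_2)| = |\phi(\reld_1)| \cdot |\phi(\reld_2)|$ noted in Example~\ref{ex:eqfirstrevisited}. Products keep every count positive; the subtlety in the amalgamation is that separation in one factor could in principle be washed out by equal multiplicative contributions from the other factors, so I would instead pass to logarithms of counts, argue linear-algebraically that a generic product (or a product with independently scaled multiplicities) preserves all the required inequalities, and thereby conclude the existence of a single $\relc$ doing all the separations at once.
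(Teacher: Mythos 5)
Your proof of the lemma is essentially the paper's own: both take a structure $\relb$ witnessing non-semi-counting-equivalence, pass to $\relb + k\reli$, and use the polynomial-in-$k$ expansion from the proof of Theorem~\ref{thm:semiequichar} to conclude that the two counts can coincide for only finitely many $k>0$ (the paper phrases this as a contradiction from agreement at all $k>0$, you phrase it directly by choosing $k$ outside the finite bad set — the same fact about polynomials). Your closing sketch of Lemma~\ref{lem:semigeneral} is outside the statement under review and does diverge from the paper, which proceeds by induction using structures of the form $\reld^\ell\times\reld'$ with carefully chosen exponents rather than a single product of pairwise separators.
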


\begin{proof}
 Let $\relb$ be any structure on which $\phi_1$ and $\phi_2$ have a non-zero but different number of solutions. Such a structure exists by definition of semi-counting equivalence. We claim that we can choose $\reld = \relb + k\reli$ for some $k\in \nats, k>0$ where $\relb +k\reli$ is defined as in the proof of Theorem~\ref{thm:semiequichar}. By way of contradiction, assume that $|\phi_1(\relb +k\reli)| =  |\phi_2(\relb +k\reli)|$ for all $k\in \mathbb{N}, k>0$. Then with the same argument as in the proof of Theorem~\ref{thm:semiequichar} we get the contradiction that $|\phi_1(\relb)| =  |\phi_2(\relb)|$. 
\end{proof}

\begin{proof} (Lemma~\ref{lem:semigeneral})
We prove this by induction on $n$;
the case $n=2$ is implied by Lemma~\ref{lem:twocasesemi}.

When $n > 2$, we first observe that it suffices to prove the result 
when the $\phi_i$ are pairwise not semi-counting equivalent,
so we assume that this holds.
Let $\reld$ be the structure that we get by induction 
for
$\phi_1, \ldots, \phi_{n-1}$.
We may assume w.l.o.g.~that 
$$|\phi_1(\reld)|<|\phi_2(\reld)|<\ldots < |\phi_{n-1}(\reld)|.$$
If it holds that 
$|\phi_n(\reld)|\ne |\phi_i(\reld)|$ for every $i\in [n-1]$,
then we are done.
So we assume that 
there is an index $i \in [n-1]$ 
such that $|\phi_n(\reld)|= |\phi_i(\reld)|$.


Let $\reld'$ be the structure we get by applying Lemma~\ref{lem:twocasesemi} to $\phi_n$ and $\phi_i$.

Now choose $k$ such that for every $j$ with $1< j\le i$ we have \[\frac{|\phi_{j-1}(\reld)|^k}{|\phi_{j}(\reld)|^k} < \frac{1}{|\lib(\phi_{j-1})|^{|D'|}}.\]
Then we have for every $\ell\ge k$ and $1< j<i$
\begin{align*}
 |\phi_{j-1}(\reld^\ell \times \reld')| &= |\phi_{j-1}(\reld^\ell )| \cdot |\phi_{j-1}(\reld')|\\
 &\le |\phi_{j-1}(\reld^\ell )| \cdot |\lib(\phi_{j-1})|^{|D'|} \\
 &< |\phi_j(\reld^\ell)|\\
 &\le |\phi_j(\reld^\ell)| \cdot |\phi_j(\reld')|\\
 &= |\phi_j(\reld^\ell\times \reld')|.
\end{align*}
Analogously, we get for every $\ell>k$ that \[|\phi_{i-1}(\reld^\ell\times \reld')|<|\phi_n(\reld^\ell\times \reld')|.\]

Now choose $k'$ such that for every $j$ with $i \le j<n$ we have \[\frac{|\phi_{j+1}(\reld)|^{k'}}{|\phi_{j}(\reld)|^{k'}} > {|\lib(\phi_{j})|^{|D'|}}.\]
Then we have for every $\ell > k'$ and every $i \le j<n$
\begin{align*}
  |\phi_{j}(\reld^\ell \times \reld')| &= |\phi_{j}(\reld^\ell )| \cdot |\phi_{j}( \reld')|\\
 &\le |\phi_{j}(\reld^\ell )| \cdot |\lib(\phi_{j})|^{|D'|} \\
 &< |\phi_{j+1}(\reld^\ell)|\\
 &\le |\phi_{j+1}(\reld^\ell)| \cdot |\phi_j(\reld')|\\
 &= |\phi_{j+1}(\reld^\ell\times \reld')|.
\end{align*}
Similarly, we get for every $\ell >k$ that \[|\phi_{i+1}(\reld^\ell\times \reld')|>|\phi_n(\reld^\ell\times \reld')|.\]
Now choosing $\ell=\max(k,k')$ and noting that 
\begin{align*}|\phi_{i}(\reld^\ell\times \reld')|&= |\phi_{i}(\reld^\ell)|\cdot|\phi_{i}(\reld')|\\ & \ne |\phi_n(\reld^\ell)| \cdot |\phi_n(\reld')|  \\& = |\phi_n(\reld^\ell\times \reld')|
\end{align*}
completes the proof with $\relc= \reld^\ell\times \reld'$.
\end{proof}

The following is a consequence of this lemma.

\begin{lemma}
Let $\phi_1(S_1), \ldots, \phi_n(S_n)$ be connected,
liberal pp-formulas over the same vocabulary $\tau$
that are pairwise not counting equivalent.
Then there exists a structure $\relc$ (over $\tau$) such that
\begin{itemize}

\item for all pp-formulas $\phi$ (over $\tau$) it holds that
$|\phi(\relc)| > 0$, and

\item for all distinct $i, j \in [n]$, it holds that
$|\phi_i(\relc)| \neq |\phi_j(\relc)|$.

\end{itemize}
\end{lemma}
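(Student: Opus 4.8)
The plan is to deduce this lemma from Lemma~\ref{lem:semigeneral} by showing that, under the stated hypotheses, the two notions of equivalence coincide on the given family. Specifically, I would first argue that for \emph{connected, liberal} pp-formulas, semi-counting equivalence is the same as counting equivalence. The key observation is that a connected pp-formula with at least one liberal variable has exactly one component, and that component is liberal; hence $\widehat{\phi_i} = \phi_i$ for each $i$, since no atom lies in a non-liberal component. By Theorem~\ref{thm:semiequichar}, two such formulas $\phi_i, \phi_j$ are semi-counting equivalent if and only if $\widehat{\phi_i} = \phi_i$ and $\widehat{\phi_j} = \phi_j$ are counting equivalent. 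Thus, on this restricted class, the relation ``semi-counting equivalent'' agrees with ``counting equivalent.''

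Given that equivalence, the hypothesis that the $\phi_i$ are \emph{pairwise not counting equivalent} immediately yields that they are \emph{pairwise not semi-counting equivalent}. I would then invoke Lemma~\ref{lem:semigeneral} directly: since the $\phi_i(S_i)$ are liberal pp-formulas over $\tau$ that are pairwise not semi-counting equivalent, the lemma furnishes a structure $\relc$ such that $|\phi(\relc)| > 0$ for every pp-formula $\phi$ over $\tau$, and such that $|\phi_i(\relc)| \neq |\phi_j(\relc)|$ for all $i, j \in [n]$ that are not semi-counting equivalent. As all distinct pairs are non-semi-counting-equivalent, the second conclusion holds for all distinct $i, j$, which is exactly what the present lemma asserts.

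The main point requiring care — though it is not a serious obstacle — is verifying the claim that for a connected liberal pp-formula $\phi$, the operation $\widehat{\cdot}$ is the identity. Here I would appeal to the definition of $\widehat{\phi}$ as the result of deleting atoms in non-liberal components together with the definition of a connected pp-formula: connectedness forces a single component in the graph of $(\rela, S)$, and liberality ($S \neq \emptyset$) forces that component to contain a liberal variable, so no atom is deleted. Consequently the reduction to Lemma~\ref{lem:semigeneral} is clean, and no new structural construction (such as the product-and-Vandermonde machinery internal to Lemma~\ref{lem:semigeneral}) needs to be redeveloped. In effect this lemma is a specialization of the previous one, obtained by identifying the two equivalence relations on the connected liberal class.
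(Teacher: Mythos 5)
Your proposal is correct and follows essentially the same route as the paper: observe that $\widehat{\phi_i} = \phi_i$ for connected liberal pp-formulas, use Theorem~\ref{thm:semiequichar} to convert the hypothesis of pairwise non-counting-equivalence into pairwise non-semi-counting-equivalence, and then apply Lemma~\ref{lem:semigeneral}. No differences of substance.
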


\begin{proof}
By Lemma~\ref{lem:semigeneral}, it suffices to show that
the pp-formulas $\phi_i$ are pairwise not semi-counting equivalent.
Since each $\phi_i$ is connected and liberal,
we have $\phi_i = \widehat{\phi_i}$.
Thus, by the hypothesis that the $\phi_i$ are pairwise not counting equivalent
in combination with
Theorem~\ref{thm:semiequichar},
we obtain that the $\phi_i$ are pairwise not semi-counting equivalent.
\end{proof}

\subsection{The all-free case}
\label{subsect:all-free}

The aim of this subsection is the proof of Theorem~\ref{thm:equivalence-theorem} in the special case of all-free ep-formulas.
Recall that an ep-formula is \emph{all-free} if
it is the disjunction of prenex pp-formulas, each of which
is \emph{free} in that it has a non-empty set of free variables. We will later in Section~\ref{subsect:general-equivalence-theorem} use the result on all-free formulas to prove the general version of Theorem~\ref{thm:equivalence-theorem}.

For every $\phi(V) \in \Phi$ we define a set $\phi^*$ of free pp-formulas;
then, we define $\Phi^* = \bigcup_{\phi \in \Phi} \phi^*(V)$.
Let $\phi(V)= \phi_1(V) \lor \ldots \lor \phi_s(V)$ where the 
$\phi_i(V)$ are free pp-formulas. By inclusion-exclusion we have for every structure $\relb$ that 
\begin{align}|\phi(\relb)| &= \sum_{J\in [s]} (-1)^{|J|+1}|(\bigwedge_{j\in J} \phi_j)(\relb)| \nonumber\\ & = \sum_{J\in [s]} (-1)^{|J|+1}|\phi_J(\relb)|,\end{align} 
where the $\phi_J(V) = \bigwedge_{j\in J} \phi_j(V)$ are pp-formulas. 
Now iteratively do the following: If there are two summands $c |\phi_J(\relb)|$ and $c' |\phi_{J'}(\relb)|$ such that $\phi_J$ and $\phi_{J'}$ are counting equivalent, delete both summands and add $(c+c')|\phi_J|$ to the sum. 
When this operation can no longer be applied, delete all summands with coefficient zero. The pp-formulas that remain in the sum form the set $\phi^*$. 

\begin{example}
\label{ex:phistar}
 It shall be advantageous to again consider Example~\ref{ex:eqsecond}. There we started off with \[\phi(V)= \phi_1(V) \lor \phi_2(V) \lor \phi_3(V).\] Inclusion-exclusion yields 
 \begin{align*}|\phi(\relb)| = &|\phi_1(\relb)| + |\phi_2(\relb)| + |\phi_3(\relb)|\\
  &- |(\phi_1\land \phi_2)(\relb)| - |(\phi_1\land \phi_3)(\relb)|  \\&- |(\phi_2 \land \phi_3)(\relb)|+ |(\phi_1 \land \phi_2 \land \phi_3)(\relb)|. \end{align*}
 Now we simplify as described above and get
 \[|\phi(\relb)| = 3 \cdot |\phi_1(\relb)| - 2\cdot |(\phi_1\land \phi_3)(\relb)| . \]
 Consequently, for this example we have \[\phi^* =\{\phi_1, \phi_1\land \phi_3\}.\]
\end{example}

The algorithm discussed above directly yields the following proposition.

\begin{proposition}\label{prop:sumwithcoeffs}
There exists an algorithm that, \\when
an all-free ep-formula $\phi$ is given as input, outputs
a set 
 $\phi^*:=\{\phi_1^*, \ldots, \phi_\ell^*\}$ of free pp-formulas,
 which are pairwise not counting equivalent,
 and coefficients $c_1, \ldots, c_\ell\in \mathbb{Z}\setminus\{0\}$ such that for every structure $\relb$,
$|\phi(\relb)|= \sum_{i=1}^\ell c_i |\phi^*_i(\relb)|$.
%
\end{proposition}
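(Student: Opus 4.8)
The plan is to verify that the algorithm described in the paragraph preceding the proposition produces exactly the claimed output; the statement is in essence the assertion that this algorithm is correct, terminating, and effective, so I would split the argument into these three claims.

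First I would fix the base identity. Since $\phi$ is all-free, each disjunct $\phi_j(V)$ is a free pp-formula sharing the liberal-variable set $V$, so by the convention on disjunctive ep-formulas we have $\phi(\relb)=\bigcup_{j=1}^s\phi_j(\relb)$ as sets of assignments on $V$. Applying ordinary inclusion-exclusion to these $s$ sets gives $|\phi(\relb)|=\sum_{\emptyset\neq J\subseteq[s]}(-1)^{|J|+1}|\bigcap_{j\in J}\phi_j(\relb)|$, and I would observe that $\bigcap_{j\in J}\phi_j(\relb)=\phi_J(\relb)$ for $\phi_J(V)=\bigwedge_{j\in J}\phi_j(V)$. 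Each such $\phi_J$ is again a free pp-formula: it is prenex pp, and its free-variable set contains $\free(\phi_j)\neq\emptyset$ for any $j\in J$. This is exactly the inclusion-exclusion identity displayed above the proposition.

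Next I would justify the merging loop. If two current summands $c\,|\phi_J(\relb)|$ and $c'\,|\phi_{J'}(\relb)|$ have $\phi_J,\phi_{J'}$ counting equivalent, then by definition $|\phi_J(\relb)|=|\phi_{J'}(\relb)|$ holds for \emph{every} finite structure $\relb$, so on every such $\relb$ the two summands together equal $(c+c')\,|\phi_J(\relb)|$; replacing them by the single term $(c+c')\,|\phi_J|$ therefore leaves the value of the whole sum unchanged on every structure, and deleting a term whose coefficient is $0$ changes nothing. By induction on the number of operations performed, the running sum equals $|\phi(\relb)|$ for all $\relb$ at every stage, hence so does the final sum $\sum_{i=1}^\ell c_i|\phi_i^*(\relb)|$. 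Termination is immediate because each merge strictly decreases the number of summands (two removed, one added), so starting from the $2^s-1$ summands the loop halts; at that point no two surviving summands carry counting-equivalent formulas, which is precisely the loop's stopping condition, so the $\phi_i^*$ are pairwise not counting equivalent, while the concluding deletion of zero terms ensures each $c_i\in\mathbb{Z}\setminus\{0\}$. Since every $\phi_i^*$ is one of the $\phi_J$, it is a free pp-formula, as required.

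For effectivity, the only nonmechanical step is the counting-equivalence test driving each merge, and here I would invoke Theorem~\ref{thm:renamingandcounting}: counting equivalence of pp-formulas coincides with renaming equivalence and is therefore decidable (indeed it lies in $\mathrm{NP}$), so the loop is computable. The point most deserving of care, and the closest thing to an obstacle, is that each merge is performed symbolically yet must remain valid for all structures simultaneously: this is exactly what the universal quantifier over $\relb$ in the definition of counting equivalence provides, and it is what lets us combine terms once and for all rather than structure by structure.
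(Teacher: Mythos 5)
Your proposal is correct and follows essentially the same route as the paper, which simply asserts that the inclusion-exclusion-plus-merging algorithm described before the proposition "directly yields" it; you have filled in the implicit verification (validity of the merges for all structures, termination, nonzero coefficients, and decidability of the equivalence test via Theorem~\ref{thm:renamingandcounting}).
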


We will also require the following two facts for our proof.

\begin{proposition}\label{prop:nothomeq}
 Let us presume that $\phi(S)$ and $\phi'(S')$ are two semi-counting equivalent free pp-formulas that are not counting equivalent and let 
 $(\rela,S)$ and $(\rela',S')$ 
 be the structures of $\phi$ and $\phi'$, respectively. Then $\rela$ and $\rela'$ are not homomorphically equivalent.
\end{proposition}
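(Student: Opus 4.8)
The plan is to argue by contraposition: assuming that $\rela$ and $\rela'$ \emph{are} homomorphically equivalent, I will derive that $\phi$ and $\phi'$ are in fact counting equivalent, contradicting the hypothesis. The whole argument rests on separating the two ways in which two pp-formulas can fail to be counting equivalent, namely a difference in \emph{where} their solution sets are empty versus a difference in the \emph{size} of those sets when they are nonempty. Semi-counting equivalence already rules out the second kind of difference, so the task reduces to ruling out the first.

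First I would record the emptiness criterion supplied by the Chandra--Merlin correspondence (Theorem~\ref{thm:ChandraMerlin}): since $S \subseteq A$, a map $f : S \to B$ lies in $\phi(\relb)$ exactly when it extends to a homomorphism from $\rela$ to $\relb$, and hence $\phi(\relb) \neq \emptyset$ if and only if there is a homomorphism from $\rela$ to $\relb$. The same holds for $\phi'$ with $\rela'$ in place of $\rela$.

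Next I would exploit the assumed homomorphic equivalence of $\rela$ and $\rela'$. Composing a homomorphism $\rela \to \relb$ with a homomorphism $\rela' \to \rela$, and symmetrically, shows that for every finite structure $\relb$ there is a homomorphism from $\rela$ to $\relb$ if and only if there is a homomorphism from $\rela'$ to $\relb$. By the emptiness criterion of the previous step, this means $\phi(\relb) = \emptyset$ precisely when $\phi'(\relb) = \emptyset$; in other words, the two solution sets are simultaneously empty or simultaneously nonempty on every $\relb$.

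Finally I would combine the two observations. Fix an arbitrary finite structure $\relb$. If both solution sets are empty, their cardinalities agree trivially. If both are nonempty, then $|\phi(\relb)| > 0$ and $|\phi'(\relb)| > 0$, so semi-counting equivalence forces $|\phi(\relb)| = |\phi'(\relb)|$. As these are the only two cases, $|\phi(\relb)| = |\phi'(\relb)|$ for every $\relb$, i.e.\ $\phi$ and $\phi'$ are counting equivalent, the desired contradiction. I do not expect a genuine technical obstacle here; the one point requiring care is the correct invocation of the Chandra--Merlin correspondence to translate nonemptiness of $\phi(\relb)$ into the existence of a homomorphism from $\rela$ to $\relb$ itself, rather than a homomorphism between the \emph{augmented} structures, which would instead capture logical entailment and is not what drives this argument.
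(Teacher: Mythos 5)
Your proof is correct, but it takes a genuinely different and more elementary route than the paper. The paper deduces from semi-counting equivalence (via Theorem~\ref{thm:semiequichar} and Theorem~\ref{thm:renamingandcounting}) that $\widehat{\phi}$ and $\widehat{\phi'}$ are renaming equivalent, and then argues that if homomorphisms existed in both directions between $\rela$ and $\rela'$, the witnessing bijections between $S$ and $S'$ could be extended over the deleted non-liberal components to yield renaming equivalence of the full formulas, hence counting equivalence --- a structural, homomorphism-extension argument. You instead bypass the characterization theorems entirely: homomorphic equivalence of $\rela$ and $\rela'$ forces $\phi(\relb)$ and $\phi'(\relb)$ to be empty on exactly the same structures (by the Chandra--Merlin nonemptiness criterion, correctly applied to $\rela$ rather than the augmented structures), and semi-counting equivalence then handles the nonempty case, so the two formulas are counting equivalent on every $\relb$. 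Your version uses only the definition of semi-counting equivalence plus one basic fact, and as a bonus it never uses the freeness hypothesis, so it proves a slightly more general statement; what it does not give you is the explicit structural relationship (the extendable bijections between liberal variable sets) that the paper's argument makes visible and that is in the spirit of the surrounding development. Both are valid; yours is arguably the cleaner proof of the stated proposition.
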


\begin{proof}
$\phi(S)$ and $\phi'(S')$ 
are semi-counting equivalent, so we have by Theorem~\ref{thm:semiequichar} and Theorem~\ref{thm:renamingandcounting} that 
$\widehat{\phi(S)}$ and $\widehat{\phi'(S')}$ are renaming equivalent. 
It follows that $\rela$ and $\rela'$ are homomorphically
equivalent via homomorphisms
$h:A \to A'$, $h':A' \to A$
 that act as bijections between
$S$ and $S'$.

If there exists a homomorphism $g$ from $\rela$ to $\rela'$,
then we can extend $h$ (using the definition of $g$)
to be defined on the components of $\phi$ deleted
in the construction of $\widehat{\phi}$, to obtain a 
homomorphism from $\rela$ to $\rela'$ extending $h$.
If there exists a homomorphism $g'$ from $\rela'$ to $\rela$,
we can extend $h'$ in an analogous way.
However, the existence of both such extensions would 
imply by definition that $\phi(S)$ and $\phi'(S')$
are counting equivalent.
We may thus conclude that either 
there is no homomorphism $\rela\rightarrow \rela'$ or 
there is
no homomorphism $\rela'\rightarrow\rela$.
\end{proof}

\begin{lemma}\label{lem:splitsemieq}
There is an oracle FPT-algorithm that performs the following: 
given a set $\phi_1, \ldots, \phi_s$ of semi-counting equivalent free pp-formulas that are pairwise not counting equivalent, a sequence $c_1, \ldots, c_s\in \mathbb{Z}\setminus \{0\}$, and a structure $\relb$, 
the algorithm computes $|\phi_i(\relb)|$ for every $i\in [s]$;
it may make calls to an oracle that provides
$\sum_{i=1}^s c_i \cdot |\phi_i(\relb')|$ upon being given a structure $\relb'$.
Here, the $\phi_i$ with the $c_i$ constitute the parameter.
\end{lemma}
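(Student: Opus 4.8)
The plan is to emulate the linear-algebra strategy of Example~\ref{ex:eqfirstrevisited}: choose suitable auxiliary structures, query the oracle on their products with $\relb$, exploit the multiplicativity $|\psi(\reld_1\times\reld_2)| = |\psi(\reld_1)|\cdot|\psi(\reld_2)|$ of pp-formula counts, and solve the resulting linear system for the unknowns $|\phi_i(\relb)|$. Concretely, let $(\rela_i,S_i)$ be the structure of $\phi_i$. For any structure $\reld$, querying the oracle on $\relb\times\reld$ returns $\sum_{i=1}^s c_i\, |\phi_i(\relb)|\cdot|\phi_i(\reld)|$, so if I pick structures $\reld_1,\ldots,\reld_s$ for which the matrix $M$ with entries $M_{k,i}=|\phi_i(\reld_k)|$ is invertible, then, writing $y_i = c_i\,|\phi_i(\relb)|$, the $s$ queries on $\relb\times\reld_1,\ldots,\relb\times\reld_s$ yield a system $My=b$ with $b_k$ the $k$th oracle answer. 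Recovering $y$ and dividing by the nonzero $c_i$ then gives each $|\phi_i(\relb)|$.

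The main obstacle is to produce auxiliary structures making $M$ invertible, and here the semi-counting equivalence of the $\phi_i$ is precisely what defeats the earlier techniques: on any single structure $\reld$ the nonzero values among $|\phi_1(\reld)|,\ldots,|\phi_s(\reld)|$ are all equal, so no single $\reld$ separates the formulas by count, and in particular the Vandermonde-from-powers trick of Example~\ref{ex:eqfirstrevisited} and Lemma~\ref{lem:semigeneral} cannot be applied (powers of one structure would yield identical columns). Instead I would take the auxiliary structures to be the $\rela_k$ themselves and exploit the homomorphism order. Since the $\phi_i$ are pairwise semi-counting equivalent but pairwise not counting equivalent, Proposition~\ref{prop:nothomeq} applies to every pair and shows that $\rela_1,\ldots,\rela_s$ are pairwise not homomorphically equivalent. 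Consequently the relation ``$i\to j$ iff there is a homomorphism from $\rela_i$ to $\rela_j$'' is reflexive, transitive (compose homomorphisms), and antisymmetric (no pair is homomorphically equivalent), i.e.\ a partial order on $[s]$.

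I would then reindex the formulas along a linear extension of this order, so that $i\to j$ implies $i\le j$, and set $\reld_k=\rela_k$. By the Chandra--Merlin correspondence, the entry $M_{k,i}=|\phi_i(\rela_k)|$ is nonzero if and only if there is a homomorphism from $\rela_i$ to $\rela_k$, i.e.\ only if $i\le k$; hence $M$ is lower triangular. Its diagonal entries $M_{k,k}=|\phi_k(\rela_k)|$ are strictly positive, since the identity is a homomorphism from $\rela_k$ to itself. A triangular matrix with nonzero diagonal is invertible, so $My=b$ is solvable by back-substitution. For the FPT bookkeeping, the formulas $\phi_i$, the coefficients $c_i$, the structures $\rela_k$, the linear extension, and the matrix $M$ all depend only on the parameter, so $M^{-1}$ (or the triangular back-substitution scheme) may be prepared in parameter-dependent preprocessing; the $s$ oracle queries are on structures $\relb\times\rela_k$ of size polynomial in $|\relb|$ and carry the same parameter; and the final back-substitution together with the divisions by the nonzero $c_i$ run in polynomial time. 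This gives the claimed oracle FPT-algorithm.
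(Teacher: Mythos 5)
Your proof is correct and follows essentially the same route as the paper: both rest on Proposition~\ref{prop:nothomeq} making the homomorphism relation on the $\rela_i$ a partial order, and both use the structures $\rela_k$ themselves as test structures on which formulas not mapping into $\rela_k$ have zero count. The paper phrases this as repeatedly isolating a minimal element and recursing with a simulated oracle for the shortened sum, which is exactly the forward substitution on your lower-triangular system, so the two arguments coincide in substance.
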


To establish this lemma, we first demonstrate the following proposition.

\begin{proposition}\label{prop:findCforsce}
 Let $\phi_1, \ldots, \phi_s$ be a sequence of semi-counting equivalent pp-formulas that are pairwise not counting equivalent. Then there is a structure $\relc$ and $i\in [s]$ such that $\relc\models \phi_i$ but $\relc\cancel{\models} \phi_j$ for all $j\in [s]\setminus \{i\}$.
\end{proposition}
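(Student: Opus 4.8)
The plan is to prove Proposition~\ref{prop:findCforsce} by finding a single structure that one of the formulas satisfies while all the others fail on, exploiting the fact that semi-counting equivalence combined with \emph{not} being counting equivalent forces a genuine asymmetry among the formulas, as captured by Proposition~\ref{prop:nothomeq}. First I would recall that by Proposition~\ref{prop:nothomeq}, for each pair $i \neq j$ the structures $\rela_i$ and $\rela_j$ underlying $\phi_i$ and $\phi_j$ are \emph{not} homomorphically equivalent; hence for each ordered pair at least one of the two homomorphism directions fails. By Theorem~\ref{thm:ChandraMerlin}, the existence of a homomorphism $\rela_i \to \rela_j$ corresponds exactly to the logical entailment $\phi_j \models \phi_i$, i.e.\ every structure satisfying $\phi_j$ also satisfies $\phi_i$. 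So the failure of homomorphic equivalence gives us, for each pair, a structure distinguishing the two formulas in at least one satisfaction direction.

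The key step is to organize these pairwise asymmetries into a single ``dominating'' index $i$ and a single witnessing structure. I would consider the entailment preorder on $\{\phi_1, \ldots, \phi_s\}$ given by $\phi_j \preceq \phi_i$ iff $\phi_j \models \phi_i$ (equivalently, there is a homomorphism $\rela_i \to \rela_j$). Since no two of these formulas are counting equivalent---and hence, by Theorem~\ref{thm:renamingandcounting}, not renaming equivalent, and not homomorphically equivalent as structures---this preorder is in fact a partial order (antisymmetry holds because homomorphic equivalence of the $\rela_i$ is ruled out). I would pick $i$ to be a minimal element of this partial order, so that $\phi_i \models \phi_j$ holds for no $j \neq i$; concretely, for each $j \neq i$ there is \emph{no} homomorphism $\rela_j \to \rela_i$. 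The natural candidate structure is then $\relc = \aug(\rela_i, S_i)$, or rather the structure obtained from $\rela_i$ by marking the liberal variables so that $\relc$ satisfies $\phi_i$ via the identity assignment; by the minimality of $i$ and Theorem~\ref{thm:ChandraMerlin}, the absence of a homomorphism $\aug(\rela_j, S_j) \to \aug(\rela_i, S_i)$ yields that $\relc$ does not satisfy $\phi_j$ for any $j \neq i$.

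The main obstacle I anticipate is reconciling satisfiability (a zero-versus-nonzero count statement, $\relc \models \phi_j$) with the hypotheses, which are phrased in terms of semi-counting equivalence (an equality of \emph{nonzero} counts). Semi-counting equivalence by itself says nothing about which formulas have a \emph{zero} count on a given structure, so I must be careful that the minimality argument rests on homomorphism-existence (via Theorem~\ref{thm:ChandraMerlin} and Proposition~\ref{prop:nothomeq}) rather than on the counting hypothesis directly. The subtle point is that a minimal element of the entailment order need only fail to entail each other $\phi_j$ in \emph{one} direction, whereas I need the witnessing structure to satisfy $\phi_i$ and \emph{simultaneously} fail all other $\phi_j$; this requires that I take $\relc$ adapted to $\phi_i$ specifically, so that $\relc \models \phi_i$ is automatic, and then invoke only the ``$\rela_j \not\to \rela_i$'' directions guaranteed by minimality. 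If a single minimal element does not suffice because the augmented structures interact awkwardly, the fallback is to build $\relc$ as a disjoint union or product of the individual augmented structures $\aug(\rela_i, S_i)$ together with sufficiently many copies of $\reli_\tau$ (as in the proof of Theorem~\ref{thm:semiequichar}) to guarantee positivity of all counts, and then argue that exactly the minimal formula is satisfied; verifying the ``fails for all $j \neq i$'' clause under such a construction is the part I expect to require the most care.
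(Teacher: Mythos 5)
Your proposal matches the paper's proof essentially exactly: both invoke Proposition~\ref{prop:nothomeq} to rule out homomorphic equivalence of the underlying structures, order the formulas by existence of homomorphisms between the $\rela_i$ (a partial order precisely because hom-equivalence is excluded), pick an extremal $\phi_i$ so that no $\rela_j$ with $j \neq i$ maps homomorphically into $\rela_i$, and take $\relc = \rela_i$. The only nits are that, with the order as you defined it ($\phi_j \preceq \phi_i$ iff $\phi_j \models \phi_i$), the element you want is \emph{maximal} rather than minimal, and the augmentation/disjoint-union fallbacks you hedge with are unnecessary---the plain structure $\rela_i$ already satisfies $\phi_i$ via the identity homomorphism and falsifies every other $\phi_j$.
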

\begin{proof}
Let $\rela_1, \ldots, \rela_n$ be the structures of the que\-ries $\phi_1, \ldots, \phi_n$. By Proposition~\ref{prop:nothomeq} the structures $\rela_i$ are pairwise not homomorphically equivalent. For $i,j\in [n]$, we write $\phi_i < \phi_j$ if there is a homomorphism from $\rela_i$ to $\rela_j$. It is easy to check that $<$ induces a partial order on the $\phi_i$. Let $\phi_i$ be a minimal element of this partial order, then there is no homomorphism from any $\rela_j$ to $\phi_i$ with $i\ne j$. Setting $\relc = \rela_i$ completes the proof.
\end{proof}

\begin{proof} (Lemma~\ref{lem:splitsemieq})
 We give and algorithm that recursively computes the $|\phi_i(\relb)|$ one after the other. So let the parameter and the input be given as in the statement of the lemma. By Proposition~\ref{prop:findCforsce}, there is an $i\in [n]$ and a structure $\relc$ such that $\relc\models \phi_i$ but $\relc\cancel{\models} \phi_j$ for all $j\in [s]\setminus \{i\}$. W.l.o.g.~assume $i=s$. Then $|\phi_i(\relb\times \relc)| = 0$ for $i<s$. Consequently, we have that the oracle lets us compute $c_s \cdot |\phi_n(\relb\times \relc)|=c_s\cdot|\phi_n(\relb)| \cdot |\phi_n(\relc)|$. Computing $|\phi_n(\relc)|$ by brute force then yields $|\phi_s(\relb)|$.
 
 Now note that for every structure $\relb'$ we can also compute $\sum_{i=1}^{s-1} c_i \cdot |\phi_i(\relb')|$ by this approach with one subtraction. So we can apply the algorithm again for $\phi_1, \ldots, \phi_{s-1}$, answering oracle queries for the smaller sum $\sum_{i=1}^{s-1} c_i \cdot |\phi_i(\relb')|$ with the help of the oracle for $\sum_{i=1}^s c_i \cdot |\phi_i(\relb')|$.
\end{proof}

We can now prove Theorem~\ref{thm:equivalence-theorem} for all-free ep-formulas.

\begin{theorem}\label{thm:reductionallfree}
Let $\Phi$ be a set of all-free ep-formulas.
There exists a set $\Phi^*$ of free prenex pp-formulas
such that the counting case problems
$\countp[\Phi]$ and $\countp[\Phi^*]$
are equivalent under counting slice reductions.
\end{theorem}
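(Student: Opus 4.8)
The plan is to establish the two directions of the slice-reduction interreducibility separately, using the decomposition machinery built up in the preceding subsections. The crux is Proposition~\ref{prop:sumwithcoeffs}, which, given an all-free ep-formula $\phi$, produces a set $\phi^* = \{\phi_1^*, \ldots, \phi_\ell^*\}$ of free pp-formulas (pairwise not counting equivalent) and nonzero integer coefficients $c_1, \ldots, c_\ell$ such that $|\phi(\relb)| = \sum_{i=1}^\ell c_i |\phi_i^*(\relb)|$ for every structure $\relb$. I would define $\Phi^* = \bigcup_{\phi \in \Phi} \phi^*$, and this is the set whose interreducibility with $\Phi$ we must prove. The reduction from $\countp[\Phi]$ to $\countp[\Phi^*]$ is the easy direction: given $\phi \in \Phi$ and $\relb$, the algorithm computes $\phi^*$ and its coefficients (this is the enumerable language $U$ providing the pair $(\phi, \phi^*)$ for coverage), queries the oracle for each $|\phi_i^*(\relb)|$, and forms the signed sum. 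Correctness is exactly the identity of Proposition~\ref{prop:sumwithcoeffs}.

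The reverse direction---reducing $\countp[\Phi^*]$ to $\countp[\Phi]$---is where the real work lies, and it is where the Vandermonde/semi-counting-equivalence apparatus is needed. Given a target pp-formula $\psi \in \Phi^*$, which arose as some $\phi_i^*$ for a witnessing $\phi \in \Phi$, I must recover $|\psi(\relb)|$ from oracle access to $|\phi(\cdot)|$, which only gives me the signed sum $\sum_i c_i |\phi_i^*(\cdot)|$. The strategy, generalizing Example~\ref{ex:eqfirstrevisited}, is twofold. First, I would partition the formulas $\phi_1^*, \ldots, \phi_\ell^*$ into semi-counting-equivalence classes. Across distinct classes, the formulas take distinct counts on a suitable structure, so I can separate the contributions of different classes by querying $|\phi(\relb \times \relc^k)|$ for a structure $\relc$ built from the preceding lemma (the final unnamed lemma of Section~\ref{subsect:semi-counting-equivalence}, which yields a $\relc$ on which distinct-class formulas get distinct positive counts) and solving a Vandermonde system in the variable $k$, using $|\psi(\relb \times \relc^k)| = |\psi(\relb)| \cdot |\psi(\relc)|^k$. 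This isolates, for the class containing $\psi$, the aggregated quantity $\sum_{j \in \text{class}} c_j |\phi_j^*(\relb)|$.

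Second, within the semi-counting-equivalence class of $\psi$, the formulas are pairwise not counting equivalent (since the whole of $\phi^*$ is pairwise not counting equivalent), so I can invoke Lemma~\ref{lem:splitsemieq} directly: it is exactly an oracle FPT-algorithm that, given semi-counting equivalent free pp-formulas that are pairwise not counting equivalent together with nonzero coefficients and access to an oracle for their signed sum, computes each individual count $|\phi_j^*(\relb)|$. Composing these two stages recovers $|\psi(\relb)|$ from the $\countp[\Phi]$ oracle, establishing the reduction; the language $U$ here supplies, for each $\psi \in \Phi^*$, a witnessing $\phi$ and the associated combinatorial data, and transitivity of slice reducibility (Theorem~\ref{thm:transitivity}) lets me chain the class-separation step with the application of Lemma~\ref{lem:splitsemieq}.

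The main obstacle I anticipate is the bookkeeping at the interface between the two stages: ensuring that the single oracle for $|\phi(\cdot)| = \sum_i c_i|\phi_i^*(\cdot)|$ can be made to simulate, after the Vandermonde-based class separation, precisely the restricted oracle $\sum_{j \in \text{class}} c_j |\phi_j^*(\cdot')|$ required as the hypothesis of Lemma~\ref{lem:splitsemieq}, for arbitrary structures $\cdot'$ of the lemma's choosing (not just the fixed $\relb$). This requires that the class-separation step work uniformly in its input structure, so that each oracle query made internally by Lemma~\ref{lem:splitsemieq} can itself be answered by an invocation of the separation procedure; verifying this uniformity, and confirming that all queried parameters stay within $\Phi$ as demanded by the slice-reduction definition, is the delicate point. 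The remaining obligations---coverage, correctness, and the computable enumerability of $U$---then follow routinely from the constructive nature of Proposition~\ref{prop:sumwithcoeffs}.
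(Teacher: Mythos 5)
Your proposal is correct and follows essentially the same route as the paper's proof: the forward direction via the signed sum of Proposition~\ref{prop:sumwithcoeffs}, and the reverse direction by grouping $\phi^*$ into semi-counting-equivalence classes, separating the classes with a Vandermonde system over queries on $\relb \times \relc^{\ell}$ (with $\relc$ from Lemma~\ref{lem:semigeneral}), and then splitting within a class via Lemma~\ref{lem:splitsemieq}. The uniformity issue you flag (simulating the per-class oracle on arbitrary structures for Lemma~\ref{lem:splitsemieq}) is real but is resolved exactly as you suggest, by rerunning the separation procedure on each queried structure.
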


Before giving the technical details of the proof of Theorem~\ref{thm:reductionallfree}, let us first descibe the ideas. The proof follows the approach presented in the examples of Section~\ref{sct:examples}. In particular, the less straightforward reduction from $\countp[\Phi^*]$ to $\countp[\Phi]$ proceeds as we did in Example~\ref{ex:eqfirstrevisited}. Given $\phi$ and $\phi^*$, we can evaluate $|\phi'(\relb)|$ for $\phi'\in \phi^*$ with an oracle for $|\phi(\relb\times \relc^\ell)|$ for a suitable structure $\relc$ as in that example. The main difference is that, instead of having $\relc$ explicitly as in Example~\ref{ex:eqfirstrevisited}, we here know from Lemma~\ref{lem:semigeneral} that a structure $\relc$ exists for which all formulas in $\phi^*$ have a different number of satisfying assignments. We can then compute $\relc$ by brute force as it depends only on $\phi$. This then allows to compute $\phi'(\relb)$ by solving a system of linear equations.

We now give the technical detail of the proof.

\begin{proof}  
Let us first specify the reduction 
from $\countp[\Phi]$ to $\countp[\Phi^*]$,
which is quite straightforward.
 The relation $U$ is the set of pairs
 $(\phi, \phi^*)$
such that $\phi$ is an all-free ep-formula
and $\phi^*$ is the output of the algorithm of
Proposition~\ref{prop:sumwithcoeffs} on input $\phi$.
Obviously, this satisfies the coverage condition. Then the oracle-FPT-algorithm to compute $\phi(\relb)$ given $\phi, \phi^*$ and  $\relb$ first computes all of the $|\phi_i^*(\relb)|$ by oracle calls and then uses Proposition~\ref{prop:sumwithcoeffs}. This completes the reduction.

For the other direction, let $\phi'\in \Phi^*$. 
We set $U$
to be the set of all pairs
$(\phi', \{ \phi \})$
such that $\phi$ is an all-free ep-formula and $\phi' \in \phi^*$.
Given $\phi'$, $\phi$ and $\relb$, 
we compute $|\phi'(\relb)|:=r(\phi', \{ \phi \}, \relb)$ as follows: 
Let $\phi_1^*, \ldots, \phi_s^*$ be the equivalence classes of $\phi^*$ with respect to semi-counting equivalence. Now choose a strucuture $\relc$ as in Lemma~\ref{lem:semigeneral}. Then for $\psi, \psi'\in \phi^*$ we have $|\psi(\relc)|\ne |\psi'(\relc)|$ if $\psi$ and $\psi'$ are from different equivalence classes with respect to semi-counting equivalence, and otherwise $|\psi(\relc)|= |\psi'(\relc)| >0$. 
Fix for each $j\in [s]$ a formula in $\phi_j^*$ and call it $\psi_j$. Moreover, denote by $c_\psi$ the coefficiencent of $\psi$ in Proposition~\ref{prop:sumwithcoeffs}. Using this notation and Proposition~\ref{prop:sumwithcoeffs} we obtain, for every $\ell \in \nats$,
that
$$|\phi(\relb\times \relc^\ell)| 
= \sum_{j=1}^s |\psi_j(\relc)|^\ell(\sum_{\psi\in \phi^*_j} c_\psi |\psi(\relb)|).$$
Note that this is a linear equation where the coefficients 
have the form $|\psi_j(\relc)|^\ell$; 
these can be computed by brute force. Letting $\ell$ range from $0$ to $s-1$ thus yields a system of linear equations whose coefficient matrix is a Vandermonde matrix. Consequently, with $s$ oracle calls we can compute
 $\sum_{\psi\in \phi^*_j} c_\psi |\psi(\relb)|$ for each $j$. 
 We use Lemma~\ref{lem:splitsemieq} to compute $\phi'(\relb)$.
\end{proof}

\subsection{The general case}
\label{subsect:general-equivalence-theorem}

We now indicate how to prove Theorem~\ref{thm:equivalence-theorem}
in its full generality.

\newcommand{\af}{\mathrm{af}}

We may assume that each ep-formula $\phi \in \Phi$ is normalized.
For each ep-formula $\phi$, define $\phi_{\af}$ to be the
all-free part of $\phi$, that is, the
disjunction of the $\phi$-disjuncts that are free;
define $\Phi_{\af}$ to be $\{ \phi_{\af} ~|~ \phi \in \Phi \}$;
and,
define $\phi_{\af}^-$ to be 
the set of formulas in $\phi_{\af}^*$ 
that do not logically entail a sentence disjunct of $\phi$.
We define $\phi^+$ to be 
the union of $\phi_{\af}^-$ and
the set containing each pp-sentence disjunct of $\phi$;
and, we define $\Phi^+$ to be $\bigcup_{\phi \in \Phi} \phi^+$.


\begin{example}
Set $V = \{ w, x, y, z \}$; 
we consider the formulas 
$\phi(V) = \phi_1(V) \vee \phi_2(V) \vee \phi_3(V)$
defined in Example~\ref{ex:eqsecond}.
Define 
$$\theta_1(V) = \exists a \exists b \exists c \exists d
E(a,b) \wedge E(b,c) \wedge E(c,d),$$
and define
$$\theta(V) = \phi_1(V) \vee \phi_2(V) \vee \phi_3(V) \vee \theta_1(V).$$
The all-free part of $\theta$ is
$\theta_{\af} = \phi_1(V) \vee \phi_2(V) \vee \phi_3(V)$,
since each of these three disjuncts has a non-empty set of free variables,
whereas $\theta_1$ has an empty set of free variables.
According to Example~\ref{ex:phistar}, we have
$$\theta^*_{\af} = \{ \phi_1, \phi_1 \wedge \phi_3 \}.$$
Now, observe that $\phi_1 \wedge \phi_3$ logically entails
the sentence disjunct $\theta_1$ of $\theta$; on the other hand,
$\phi_1$ does not logically entail $\theta_1$.
Hence, we have that $\theta^-_{\af} = \{ \phi_1 \}$.
We have $\theta^+$ to be the union of $\theta^-_{\af}$ and
$\{ \theta_1 \}$, so
$$\theta^+ = \{ \phi_1, \theta_1 \}.$$
\end{example}

The idea of the proof of Theorem~\ref{thm:equivalence-theorem}
is as follows.
The counting slice reduction
from $\countp[\Phi]$ to $\countp[\Phi^+]$ has
$U$ as the set of pairs $(\phi, \phi^+)$ where $\phi$
is a normalized ep-formula; $r$ on $(\phi(V), \phi^+, \relb)$
behaves as follows.
First, it checks if there is a sentence disjunct $\theta$ of $\phi$
that is true on $\relb$; if so, it outputs $|B|^{|V|}$;
otherwise, it makes use of the counting slice reduction 
from 
$\countp[\Phi_{\af}]$ to $\countp[\Phi_{\af}^*]$.
The counting slice reduction from
$\countp[\Phi^+]$ to $\countp[\Phi]$
has $U$ as the set $\{(\psi, \{\phi\}) ~|~ \psi \in \phi^+ \}$;
$r$ on $(\psi(V), \phi(V), \relb)$ is defined as follows.
When $\psi \in \phi^-_{\af}$, the counting slice reduction $(U', r')$
from
$\countp[\Phi^*_{\af}]$ to $\countp[\Phi_{\af}]$
is used to determine $|\psi(\relb)|$; this is performed
by passing to $r'$ a treated version of $\relb$,
on which no sentence disjunct of $\phi$ may hold.
When $\psi$ is a sentence disjunct of $\phi$, 
an oracle query is made
to obtain the count of $\phi$ on a treated version of $\relb$;
on this treated version, it is proved that
all assignments satisfy $\phi$ 
if and only if $\relb \models \psi$.

\section{Conclusion}


We have shown a trichotomy for the parameterized complexity of counting satisfying assignments to  existential positive formulas of bounded arity. To this end, the main technical contribution was the equivalence theorem (Theorem~\ref{thm:equivalence-theorem}) stating that for every set of existential positive formulas there is a set of  primitive positive formulas that is computationally equivalent with respect to 
the counting problem studied. After showing this equivalence theorem, we could derive our trichotomy in a rather straightforward fashion by 
invoking a previous trichotomy for primitive positive formulas as a black-box.

In order to prove the equivalence theorem, we gave a syntactic characterization for when two pp-formulas are counting equivalent, that is, have the same number of satisfying assignments with respect to every finite structure. 
This result can be seen as an adaption, to the counting setting,
 of classical work of Chandra and Merlin~\cite{ChandraMerlin77-optimal} that characterizes logical equivalence of primitive positive formulas. 

Let us note that the assumption of bounded arity is not needed in the proof of the equivalence theorem. It only appears in our trichotomy theorem because it is already present in the previous trichotomy on primitive positive formulas that we use. Consequently, if one could adapt the work of 
Marx~\cite{Marx10-tractablehypergraph}  
on model checking unbounded arity primitive positive formulas to counting to show a dichotomy or trichotomy for counting, this would directly give the corresponding result for existential positive formulas by applying our equivalence theorem.

Finally, let us remark that we are not aware of any 
fragment of first-order logic extending existential positive queries
for which even model checking is understood, from the viewpoint of classifying the complexity of all sets of queries
(for more information, see the discussion in
the introduction of the article~\cite{Chen14-frontier}).
Hence,
the research project of extending our 
complexity classification beyond existential positive queries
would first require an advance in the study of model checking
in first-order logic.



\newpage

\arxivconf{\bibliographystyle{alpha}}{\bibliographystyle{plain}}
\bibliography{../../hubiebib}

\newpage
\appendix






\section{Proof of Theorem 3.1}

\label{subsect:proof-of-equiv-theorem}
\begin{proof} (Theorem~\ref{thm:equivalence-theorem})
We first describe a counting slice reduction $(U, r)$ from 
$\countp[\Phi]$ to $\countp[\Phi^+]$.
Let $(U', r')$ be the counting slice reduction from
$\countp[\Phi_{\af}]$ to $\countp[\Phi_{\af}^*]$
given by Theorem~\ref{thm:reductionallfree}.
Define $U$ to be the set 
$\{ (\phi, \phi^+) ~|~ \textup{$\phi$ is a normalized ep-formula } \}$.
When $(\phi, \phi^+) \in U$,
we define
$r(\phi(V), \phi^+, \relb)$ to be the result of the following
algorithm, which is FPT with respect to $(\pi_1, \pi_2)$.
For each sentence disjunct $\theta$ of $\phi(V)$, 
the algorithm queries 
$\countp(\theta, \relb)$; if for some such disjunct $\theta$
it holds that $\relb \models \theta$, then the algorithm outputs $|V|^{|B|}$.
Otherwise, for any assignment $f: V \to B$,
it holds that $\relb, f \models \phi$ 
if and only if $\relb, f \models \phi_{\af}$.
So, the algorithm returns 
$r'(\phi_{\af}, \phi_{\af}^*, \relb)$ by running 
the corresponding algorithm for $r'$.
In this run, the algorithm for $r'$ only makes queries of the form
$(\psi, \relb)$ (with $\psi \in \phi_{\af}^*$);
such queries where $\psi \in \phi^{-}_{\af}$ are resolved using
the oracle in the definition of counting slice reduction,
and queries where $\psi \in \phi_{\af}^* \setminus \phi_{\af}^-$
are answered with $0$.
Correctness is straightforward to verify.

We next describe a counting slice reduction 
$(U, r)$ from
$\countp[\Phi^+]$ to $\countp[\Phi]$.
Let $(U', r')$ denote the counting slice reduction from
$\countp[\Phi^*_{\af}]$ to $\countp[\Phi_{\af}]$
given by Theorem~\ref{thm:reductionallfree}.
We define $U:=\{ (\psi, \{ \phi \}) ~|~ \psi \in \phi^+ \}$.
We need to define
$r(\psi(V), \phi(V), \relb)$ when $(\psi, \phi) \in U$.

Let us describe first an algorithm for \arxivversion{the mapping} $r$
in the case that $\psi \in \phi^{-}_{\af}$.
Let $(\relc_1, V), \ldots, (\relc_m, V)$ denote the 
pp-formulas in $\phi^{-}_{\af}$, and let 
$\relc$ denote the disjoint union of the structures $\relc_i$.
Observe that for any structure $\reld$, it holds that
$\reld \times \relc, f \models \phi$
if and only if $\reld \times \relc, f \models \phi_{\af}$, 
since no sentence disjunct of $\phi$ holds on $\relc$
(due to the definitions of $\relc$ and $\phi^{-}_{\af}$).
Call the algorithm for $r'$ to compute
$r'(\psi, \{ \phi_{\af} \}, \relb \times \relc) = 
|\psi(\relb \times \relc)|$;
note that the oracle queries made by this algorithm
can be resolved by an oracle for $\countp(\phi,\cdot)$, 
since all such oracle queries have the form 
$\countp(\phi_{\af}, \cdot \times  \relc)$.
As $|\psi(\relb \times \relc)| = |\psi(\relb)| \cdot |\psi(\relc)|$,
by dividing this quantity by $|\psi(\relc)|$, 
one can determine $|\psi(\relb)|$, which is the desired value.
Note that by the definition of $\relc$, it holds that 
$|\psi(\relc)|$ is non-zero.

In order to describe the behavior of the algorithm for $r$
in the case that $\psi$ is a sentence disjunct of $\phi$,
we establish the following claim.
Let $(\rela, V)$ be the structure view of $\psi$.	

{\bf Claim:} Let $i: V \to V$ be the identity map on $V$.
For each disjunct $\theta$ of $\phi$, it holds that
$\rela, i \models \theta(V)$ if and only if $\theta = \psi$.

The backwards direction is clear, so we prove the forwards direction.
If a disjunct $\theta$ is a free pp-formula, then 
$\rela, i \not\models \theta(V)$ since
$\theta$ contains an atom using a variable $v \in V$,
whereas no tuple of a relation of $\rela$ contains any variable from $V$.
If a disjunct $\theta$ is a pp-sentence $(\rela', V)$
not equal to $\psi$, then by definition of \emph{normalized} ep-formula,
there is no homomorphism from $\rela'$ to $\rela$ and hence
$\rela, i \not\models \theta(V)$.  This establishes the claim.

Now suppose that $\psi$ is a sentence disjunct of $\phi$.
In this case, the algorithm for $r(\psi(V), \phi(V), \relb)$ 
behaves as follows.
It queries $\countp(\phi, \rela \times \relb)$ to determine
$|\phi(\rela \times \relb)|$;
it outputs $|B|^{|V|}$ if $|\phi(\rela \times \relb)|$ is equal
to $(|A|\cdot|B|)^{|V|}$ (the \emph{maximum count} possible there),
and outputs $0$ otherwise.
We prove that this is correct by 
showing that $|\phi(\rela \times \relb)|$ is the maximum count
if and only if $\relb \models \psi$.

For the backwards direction, suppose that $\relb \models \psi$,
and denote $\psi$ by $(\rela, V)$.  Then, there is a homomorphism
from $\rela$ to $\relb$, and hence there is a homomorphism
from $\rela$ to $\rela \times \relb$.
It follows that for any assignment $f: V \to V$,
one has $\rela \times \relb, f \models \psi(V)$.
For the forwards direction, suppose that
$|\phi(\rela \times \relb)|$ is the maximum count.
Let $i': V \to A \times B$ be any map
such that for each $v \in V$, the value
$i'(v)$ has the form $(i(v),j(v))$
where $j: V \to B$ is a map.
We have that $\rela \times \relb, i' \not\models \phi(V)$.
It follows that there is a disjunct $\theta$ of $\phi$
such that
$\rela \times \relb, i' \models \theta(V)$.
It follows that
$\rela, i \models \theta(V)$ and $\relb, j \models \theta(V)$.
By the claim established above, we have that $\theta = \psi$.
Then, it holds that $\relb, j \models \psi$, and we are done.
\end{proof}









%




\end{document}